\def\BibTeX{{\rm B\kern-.05em{\sc i\kern-.025em b}\kern-.08em
    T\kern-.1667em\lower.7ex\hbox{E}\kern-.125emX}}
\newtheorem{example}{Example}
\definecolor{tylergreen}{HTML}{194722}
\newcommand{\frme}{\href{http://frontrun.me/}{frontrun.me}}
\begin{document}

\title{Flash Boys 2.0: \\ Frontrunning, Transaction Reordering, and Consensus Instability in Decentralized Exchanges}
\author{\centering
\IEEEauthorblockN{}
\IEEEauthorblockA{\phantom{Corne} \\
}
\and
\IEEEauthorblockN{Philip Daian}
\IEEEauthorblockA{\textit{Cornell Tech} \\
phil@cs.cornell.edu}
\and
\IEEEauthorblockN{Steven Goldfeder}
\IEEEauthorblockA{\textit{Cornell Tech} \\
goldfeder@cornell.edu}
\and
\IEEEauthorblockN{Tyler Kell}
\IEEEauthorblockA{\textit{Cornell Tech} \\
sk3259@cornell.edu}
\and
\IEEEauthorblockN{Yunqi Li}
\IEEEauthorblockA{\textit{UIUC} \\
yunqil3@illinois.edu}
\and
\IEEEauthorblockN{Xueyuan Zhao}
\IEEEauthorblockA{\textit{CMU} \\
xyzhao@cmu.edu}
\and
\IEEEauthorblockN{}
\IEEEauthorblockA{\phantom{Corne} \\
}
\and
\IEEEauthorblockN{}
\IEEEauthorblockA{\phantom{Cornell Tech} \\
}
\and
\IEEEauthorblockN{Iddo Bentov}
\IEEEauthorblockA{\textit{Cornell Tech} \\
ib327@cornell.edu}
\and
\IEEEauthorblockN{Lorenz Breidenbach}
\IEEEauthorblockA{\textit{ETH Zürich} \\
lorenz.breidenbach@inf.ethz.ch}
\and
\IEEEauthorblockN{Ari Juels}
\IEEEauthorblockA{\textit{Cornell Tech} \\
juels@cornell.edu}
\and
\IEEEauthorblockN{}
\IEEEauthorblockA{\phantom{C} \\
}
}
\maketitle
\thispagestyle{plain}
\pagestyle{plain}

\begin{abstract}

Blockchains, and specifically smart contracts, have promised to create fair and transparent trading ecosystems.

Unfortunately, we show that this promise has not been met. We document and quantify the widespread and rising deployment of {\em arbitrage bots} in blockchain systems, specifically in \emph{decentralized exchanges} (or ``DEXes"). Like high-frequency traders on Wall Street, these bots exploit inefficiencies in DEXes, paying high transaction fees and optimizing network latency to frontrun, i.e., anticipate and exploit, ordinary users' DEX trades.  

We study the breadth of DEX arbitrage bots in a subset of transactions that yield quantifiable revenue to these bots. We also study bots' profit-making strategies, with a focus on blockchain-specific elements. We observe bots engage in what we call \emph{priority gas auctions} (PGAs), competitively bidding up transaction fees in order to obtain priority ordering, i.e., early block position and execution, for their transactions. PGAs present an interesting and complex new continuous-time, partial-information, game-theoretic model that we formalize and study. We release an interactive web portal, \frme, to provide the community with real-time data on PGAs.

We additionally show that high fees paid for priority transaction ordering poses a systemic risk to {\em consensus-layer} security. We explain that such fees are just one form of a general phenomenon in DEXes and beyond---what we call \emph{miner extractable value} (MEV)---that poses concrete, measurable, consensus-layer security risks. We show empirically that MEV poses a realistic threat to Ethereum today.

Our work highlights the large, complex risks created by transaction-ordering dependencies in smart contracts and the ways in which traditional forms of financial-market exploitation are adapting to and penetrating blockchain economies.

\end{abstract}
\newcommand{\newType}[3]{%
  \DeclareDocumentCommand{#1}{oo}{%
    \ensuremath{%
      \textcolor{#2}{%
        \mathchoice%
          {{#3}\IfNoValueF{##1}{_{##1}}\IfNoValueF{##2}{^{##2}}}%
          {{#3}%
            \IfNoValueF{##1}{_{\raisebox{0pt}[0pt]{\scriptsize\ensuremath{##1}}}}%
            \IfNoValueF{##2}{^{\raisebox{0pt}[0pt]{\scriptsize\ensuremath{##2}}}}}%
          {{#3}\IfNoValueF{##1}{_{##1}}\IfNoValueF{##2}{^{##2}}}%
          {{#3}\IfNoValueF{##1}{_{##1}}\IfNoValueF{##2}{^{##2}}}%
      }%
    }\xspace%
  }%
  \WithSuffix\newcommand#1'[1][]{\ensuremath{\textcolor{#2}{{#3}'_{##1}}}\xspace}%
  \WithSuffix\newcommand#1*[1][]{#1[##1][\$]}%
}
\definecolor{DarkGreen}{rgb}{0.0, 0.4, 0.0}
\colorlet{money}{DarkGreen}


\newcommand{\strategy}{\ensuremath{S}}
\newcommand{\currenttime}[1]{\ensuremath{{t_{#1}^*}}}
\newcommand{\timet}{\ensuremath{t}}
\newcommand{\waketime}[1]{\ensuremath{\timet_{w#1}}}
\newcommand{\move}{\ensuremath{a}}
\newcommand{\bidtuple}{\ensuremath{b}}
\newcommand{\bidtuples}{\ensuremath{{\bf b}}}
\newcommand{\currentbidtuples}{\ensuremath{{\bf b}^*}}
\newcommand{\bid}{\ensuremath{b}}
\newcommand{\playstate}{\ensuremath{\sigma}}
\newcommand{\latency}{\ensuremath{\Delta}}
\newcommand{\wake}{\ensuremath{\delta}}
\newcommand{\player}{\ensuremath{\mathcal{P}}}
\newcommand{\pindex}{\ensuremath{i}}
\newcommand{\exec}{\ensuremath{{\sf Exec}}}
\newcommand{\pend}{\ensuremath{{\bf p}}}
\newcommand{\delayed}[1]{\ensuremath{{\textbf{d}_{#1}}}}
\newcommand{\firstbidtime}{\ensuremath{{\sf FirstBidTime}}}
\newcommand{\firsttime}{\ensuremath{{\sf FirstTime}}}
\newcommand{\bidplayer}{\ensuremath{{\sf BidPlayer}}}
\newcommand{\bidtime}{\ensuremath{{\sf BidTime}}}
\newcommand{\nexttime}{\ensuremath{{\sf NextBidTime}}}
\newcommand{\firstbid}{\ensuremath{{\sf FirstBid}}}
\newcommand{\popfirst}{\ensuremath{{\sf PopFirstBid}}}
\newcommand{\poptime}{\ensuremath{{\sf PopFirstTime}}}
\newcommand{\maxbid}{\ensuremath{{\sf MaxBid}}}
\newcommand{\blockinterval}{\ensuremath{D}}
\newcommand{\interval}{\ensuremath{\delta}}
\newcommand{\payoffexec}[2]{\ensuremath{{\sf PO}_{#2}^{#1}}}

\newcommand{\lossmult}{\ensuremath{\ell}}

\newType{\payoff}{money}{\$\mathsf{1}}
\newType{\bidcost}{money}{\$\mathsf{c}}
\newType{\payoffa}{money}{\$\mathsf{r}}
\newType{\bid}{money}{\$\mathsf{b}}
\newType{\gain}{money}{\$\mathsf{r}}

\newcommand{\specialcell}[2][c]{%
  \begin{tabular}[#1]{@{}l@{}}#2\end{tabular}}

\newtheorem{observation}{Observation}
\newtheorem{theorem}{Theorem}
\newtheorem{lemma}{Lemma}

\newtcolorbox{myframe}[2][]{%
  enhanced,colback=white,colframe=black,coltitle=black,
  sharp corners,boxrule=0.4pt,
  fonttitle=\itshape,
  attach boxed title to top left={yshift=-0.3\baselineskip-0.4pt,xshift=2mm},
  boxed title style={tile,size=minimal,left=0.5mm,right=0.5mm,
    colback=white,before upper=\strut},
  title=#2,#1
}
\section{Introduction}
\label{sec:intro}

Cryptocurrency exchanges today handle more than \$10 billion in trade volume per day. The vast majority of this volume occurs in {\em centralized} exchanges, which hold custody of customer assets and settle trades. At best loosely regulated, centralized exchanges have experienced scandals ranging from high-profile thefts~\cite{mcmillan2014inside} to malfeasance such as price manipulation~\cite{gandal2018price}. One popular alternative is what is called a \emph{decentralized exchange} (or ``DEXes'').\footnote{``Decentralized" exchange is something of a misnomer, as many such systems have centralized components; most systems we call ``decentralized" exchanges could more accurately be classified as non-custodial: users trade without surrendering control of their funds to a third party in the process.}  In a DEX, a smart contract (a program executing on a blockchain) or other form of peer-to-peer network executes exchange functionality.

At first glance, decentralized exchanges seem  ideally designed. They appear to provide effective price discovery and fair trading, while doing away with the drawbacks of centralized exchanges. Trades are atomically executed by a smart contract and visible on the Ethereum blockchain, providing the appearance of transparency. Funds cannot be stolen by the exchange operator, because their custody and exchange logic is processed and guaranteed by the smart contract.

Despite their clear benefits, however, many DEXes come with a serious and fundamental weakness: on-chain, smart-contract-mediated trades are slow.\footnote{The average Ethereum block time is roughly 15s at the date of writing~\cite{Etherscan}.} Traders thus may attempt to take orders that have already been taken or canceled but appear active due to their views of messages sent on the network. Worse still, adversaries can {\em frontrun} orders, observing them and placing their own orders with higher fees to ensure they are mined first.

Past work has acknowledged ``transaction ordering dependence'' as an anti-pattern and vector for potential frontrunning~\cite{kosba2016hawk,luu2016making}.  Unfortunately, these analyses have previously proved overly broad: virtually every smart contract can be said to have \emph{some} potential dependence on transaction order, the majority of which is benign.  As a result, effective practical mitigations for these issues have failed to materialize, and few deployed smart contracts feature ordering protections.  Other work has focused on systematizing knowledge around smart contract frontrunning~\cite{eskandari2019sok}, including citing early public versions of this work, but has not measured the size of this economy or formalized its connection to protocol attacks.

In this work, we explain that DEX design flaws threaten underlying blockchain security. We study a community of {\em arbitrage bots} that has arisen to exploit DEX flaws. We show that these bots exhibit many similar market-exploiting behaviors---frontrunning, aggressive latency optimization, etc.---common on Wall Street, as revealed in the popular Michael Lewis expos\'{e} {\em Flash Boys}~\cite{lewis2014flash}. We explore the DEX design flaws that spawned arbitrage bots, measure and model these bots' behavior, and illuminate systemic smart-contract ecosystem risks implied by our observations. Our main focuses are:

\vspace{1mm}
\noindent \textbf{Pure revenue opportunities}: A specific sub-category of DEX arbitrage representative of broader activity, these are blockchain transactions that issue multiple trades atomically through a smart contract and profit unconditionally in every traded asset. We choose these opportunities as a focus because their simplicity makes them especially amenable to study and measurement. We experimentally determine a lower bound on this economy of over USD 6M to date and describe its participating exchanges and bots.

\vspace{1mm}
\noindent \textbf{Priority gas auctions (PGAs)}: Because pure revenue opportunities offer unconditional revenue, arbitrage bots often compete against each other by bidding up transaction fees (gas) in what we call PGAs.  We formally model bot PGA behavior and observe a cooperative equilibrium. We show that empirical measurements of the evolution of bot PGA strategies validate key features of our model.

\vspace{1mm}
\noindent \textbf{Miner-extractable value (MEV)}: We introduce the notion of MEV, value that is extractable by miners directly from smart contracts as cryptocurrency profits.  One particular source of MEV is \emph{ordering optimization (OO) fees}, which result from a miner's control of the ordering of transactions in a particular epoch. PGAs and pure revenue opportunities provide one source of OO fees. We show that MEV creates systemic consensus-layer vulnerabilities.

\vspace{1mm}
\noindent\textbf{Fee-based forking attacks}: We show that OO fees can incentivize miners to mount forking attacks. While fee-based attacks were previously studied theoretically in Bitcoin \cite{carlsten2016instability}, we empirically demonstrate a current, realistic threat in Ethereum.

\vspace{1mm}
\noindent\textbf{Time-bandit attacks}: We show that high-MEV regimes in general lead to a new attack in which miners rewrite blockchain history to steal funds allocated by smart contracts in the past. We call these {\em time-bandit attacks}. Our experiments show that MEV from pure revenue profits and PGA bot fees suffice to enable time-bandit attacks {\em on today's Ethereum}.  

\vspace{2mm}

Our results are surprising for two key reasons.

\vspace{2mm}

First, they identify a concrete difference between the consensus-layer security model required for blockchain protocols securing simple payments and those securing smart contracts. In a payment system such as Bitcoin, all independent transactions in a block can be seen as executing atomically, making ordering generally unprofitable to manipulate.  Our work shows that analyses of Bitcoin miner economics fail to extend to smart contract systems like Ethereum, and may even require modification once second-layer smart contract systems that depend on Bitcoin miners go live~\cite{lerner2015rootstock}.

Second, our analysis of PGA games underscores that protocol details (such as miner selection criteria, P2P network composition, and more) can directly impact application-layer security and the fairness properties that smart contracts offer users. Smart contract security is often studied purely at the application layer, abstracting away low-level details like miner selection and P2P relayers' behavior in order to make analysis tractable (e.g.~\cite{buterin2014schellingcoin,gyges, zhangparalysis,zhang2016town}). Our work shows that serious blind spots result. Low-level protocol behaviors pose fundamental challenges to developing robust smart contracts that protect users against exploitation by profit-maximizing miners and P2P relayers that may game contracts to subsidize attacks.

\vspace{2mm}

To illuminate the behaviors explored in this paper, we release a web dashboard, \frme, that presents PGA data in real time. We open-source all associated code and hundreds of gigabytes of raw data on the Ethereum PGA economy (with processed data).\footnote{Our Github repository at \url{https://github.com/pdaian/flashboys2} contains all infrastructure, data processing, and visualization code and data, plus code for our original arbitrage trade bot.}

We hope our efforts in general offer  insight into the broad, application- and consensus-layer risks created by ordering dependencies in smart contracts and into the effects of traditional financial-market exploitation on blockchain consensus.
\section{Background}

We now provide background required to understand PGAs. 

\subsection{Smart Contracts}

Smart contracts are small computer programs executed without user intervention, often by a system that allows all of its participants to verify these programs' correct execution. Smart contracts often use a public blockchain network as the underlying infrastructure for their execution~\cite{szabo1997formalizing,wood2017ethereum}.  

Ethereum~\cite{wood2017ethereum} is currently the largest smart contract system that is Turing-complete, i.e., allows encoding of arbitrary smart-contract functionality. Ethereum smart contracts have been used or proposed for a range of complex transaction types, including shareholder voting~\cite{mccorry2017smart}, stakeholder-owned investment funds and vehicles~\cite{DAO,dao-attack}, fair exchange protocols for goods~\cite{zhang2016town}, complex key management solutions~\cite{zhangparalysis}, video games~\cite{kharifcryptokitties}, virtual casinos~\cite{meng2018understanding}, and more.

The most popular smart contracts on Ethereum by daily active users primarily concern virtual sub-currencies called {\em tokens}.  These tokens can represent any scarce item, e.g. collectible resources in a video game~\cite{kharifcryptokitties} or shares in a venture~\cite{fenu2018ico}. The latter fueled a 12 billion USD token-based capital-investment craze called the ``ICO boom".  DEXes are a popular type of smart contract that allow users to trade such tokens in a non-custodial manner~\cite{warren20170x}.

\subsection{Gas and Fees in Ethereum}
\label{sec:ethgasbackground}

An Ethereum transaction either sends money to a non-executing {\em account address} or sends  input data (and possibly money) to a {\em smart contract address}, representing a program stored in the network's state. Transactions are gossiped to all of the nodes in the underlying peer-to-peer network to signal their availability for inclusion in a future mined block. Transactions can be in one of three states: {\em unconfirmed} and not yet mined, {\em confirmed} and considered to have been executed, or {\em rejected} as invalid by the network of Ethereum peers.

Ethereum transactions consume \emph{gas}, a pseudo-currency reflecting the number of computational steps performed by a miner (and other network nodes) executing a transaction. Ethereum contracts may contain complex logic, loops, etc., so their gas consumption can only be determined via execution. Ethereum meters gas consumption via a fixed mapping from contract op-codes to units of gas~\cite{wood2017ethereum}.

Every transaction submitted to the network for mining specifies a {\em gas price}, the per-gas-unit rate the sender will pay in Ether (ETH). The gas price times the units of gas consumed determines the {\em fee} in ETH paid by the transaction sender to the node that ultimately mines the transaction. (Bitcoin fees, by contrast, depend simply on transaction byte lengths.)

Transactions must also specify a \emph{gas limit}, the maximum number of steps a network node should attempt before rejecting a transaction. The gas limit prevents infinite loops and other DoS vectors, and allows immediate verification that a sender has adequate available funds to pay the transaction fee---up to gasPrice $\cdot$ gasLimit ETH. 

Clients can also perform what is called a ``gas replacement" transaction, resubmitting a transaction with a higher transaction fee (gasPrice $\cdot$ gasLimit) in the hope that miners will more quickly incorporate the transaction into a block.  The mechanism for doing this is nonce-based.  Each transaction issued on the network carries a \emph{nonce}, and valid canonical blockchains must only include one transaction per (account, nonce) pair on the network. When a user reissues an unconfirmed transaction with the same (account, nonce) pair but a higher gas price, a miner will prefer the reissued transaction---with its ostensibly higher transaction fee---to the replaced one.

Note that a higher gas price only corresponds to a higher fee if the reissued transaction {\em uses the same amount of gas as the replaced one}, which may not hold true. Mining software assumes this to be the case, in part because computing gas consumption incurs the computational burden of executing a contract. Arbitrageurs leverage this heuristic to pay reduced transaction fees, as we detail in Section~\ref{sec:modelproperties}.

\subsection{Continuous-Limit Exchanges}

Classic exchanges for trading stocks, commodities, and even cryptocurrencies generally share an accepted and common exchange design known as a \emph{continuous-limit orderbook}. Such an orderbook consists of a list of all open offers from buyers and sellers in the system.  Prospective buyers place a \emph{limit buy} order, which specifies a maximum price at which they are willing to buy an asset; sellers correspondingly place \emph{limit sell} orders.  A centralized counterparty, the \emph{exchange operator}, matches buyers and sellers, completing transactions automatically when there is a sell order on the books with a lower price than a buy order on the books.  Orders are matched and / or placed on the books continuously by the exchange operator, which processes orders as quickly as possible in the order they are received.  As soon as orders match, they are processed by the operator and trigger balance changes.

\subsection{Decentralized Exchanges (DEXes)}
DEXes manage continuous-limit order books using smart contracts. Traders / users hold their assets on chain and the smart contract plays the traditional role of exchange operator.

Order books are typically maintained off-chain. In some DEX designs, a counterparty selects a fresh order in the order book and presents it to the smart contract with a signed counterorder. The smart contract executes the order and counterorder, clearing the order from the order book. Traders themselves thus perform order matching. This approach is used by Etherdelta and some applications of the 0x protocol. In an alternative approach, used by, e.g., IDex and Paradex, the exchange itself performs matching off chain and submits order / counterorder pairs to the smart contract for processing. 

A more radical DEX design, called an \emph{automated market maker}~\cite{berg2009hanson}, bypasses order books altogether. The DEX consists of a smart contract that itself holds a reserve of tokens and/or Ether. Consider two assets $A$ and $B$. The contract allows a user to trade between $A$ and $B$ at any time using its reserves as a counterparty, at a set rate.  If a user buys $A$ using $B$, the price of $A$ denominated in $B$ offered by the smart contract for the \emph{next trade} is increased.  If a user instead sells $A$ for $B$, the price decreases. In this way, single parties can trade without counterparty discovery or matching.  Consequently, constant arbitrage between these and other exchanges is required to keep the rate offered in lockstep with the market rate for a commodity.  Uniswap~\cite{Uniswap:2019} and Bancor~\cite{Bancor:2019} are examples of such exchanges.

\subsection{Frontrunning and Profits through Arbitrage}

Traditional exchanges experience a classic form of predatory market behavior called frontrunning~\cite{bernhardt2008front}. In regulated markets, frontrunning is often illegal, and has resulted in prosecutions~\cite{manahov2016front} and tarnished the reputations of financial institutions caught practicing it.

Frontrunning generally exploits information asymmetries created by power structures within a financial structure, e.g., brokers having privileged access to user information.  Because there is no single party playing the role of a broker in decentralized systems, information asymmetries can arise for actors in advantageous positions in underlying infrastructure.

Frontrunning can also occur based on changes in public market information (for example, reacting to breaking news that impacts stock prices). In this form, it is not only legal, but serves as the basis of a multi-billion-dollar high-frequency trading economy.  One potential source of profit is price discrepancies across exchanges trading the same or correlated assets.  Another is information asymmetries in the speed of processing or interpreting news.  

Automated market bots from high frequency trading firms compete to profit from both at extremely high speed. They regularly build physical networks costing billions of dollars and approaching speed-of-light transmission across considerable distances.  Many economists view this behavior as a zero-sum game that profits exchanges in the long term, and argue formally that the existence of such rents is a fundamental limitation of market design that is a natural consequence of arbitrage opportunities across exchanges~\cite{budish2015high}.  This, however, remains a controversial viewpoint.

We explore both cases where bots frontrun user orders and cancellations directly, e.g., in the event of a typo or market structure weakness, and cases where bots exploit market inefficiencies to extract rent. We argue that both degrade the economic security of the underlying consensus protocol.

\section{From Decentralization to Arbitrage}

In this section, we take a deep dive into a particular example of frontrunning, arbitrage, and high-frequency automated trading on a decentralized exchange.  This concrete example will provide context for the remainder of our discussions on modeling this market and of its impact on the security of the underlying smart contract systems.

One source of potential profit, price differences, seems inherent in an environment such as smart-contract-based exchange.  Today, blockchains operate with transactions processed in discrete batches (blocks).  Furthermore, transactions are inherently dependent and therefore serial: order failures depend on past order attempts, and in some exchanges prices depend directly on order history. With multiple exchanges operating on the same system, it is possible that price differences will occur across exchanges while transactions in a block, and therefore trades on exchanges, are executed sequentially.

\subsection{Smart-Contract-Enabled Trade Atomicity}

Smart contract arbitrage opportunities have an additional, distinctive characteristic absent in traditional cross-exchange arbitrage.  Because of the atomic batch-based processing of transactions, and because transactions can themselves be initiated by smart contracts, is possible to build bots that trade across exchanges through \emph{proxy contracts}.  These proxy contracts can execute \emph{batches} of orders sequentially within a single transaction, reverting previous trades by throwing an exception if any trade in the batch fails.

This means arbitrageurs have the opportunity to compose single transactions that execute multiple trades across multiple exchanges atomically, with an all-or-nothing failure model.  One example of such a transaction is buying an asset for price $x$ and selling it immediately for price $x' > x$; if performed atomically, these transactions together generate guaranteed revenue in the base asset. For example, a smart contract proxy could execute a trade buying a type-$X$ token for $2$ ETH, and another selling it for $3$ ETH.  If both orders are on the books on some decentralized exchange, a smart contract executing both guarantees a revenue to the arbitrageur of $1$ ETH.

In traditional cross-exchange arbitrage, trades are viewed probabilistically, as there is a high likelihood that one of two trades will succeed while the other fails.  This makes smart-contract-based arbitrage in many ways simpler to observe, analyze, and study than traditional cross-exchange arbitrage, as bot intent is often explicit in order requests.

In our measurements, we focus on a small subset of these multi-trade arbitrage opportunities, which potentially involve multiple decentralized exchanges.  We restrict our focus to \emph{pure revenue opportunities}. In these opportunities, a single smart-contract based transaction executes multiple trades across one or more exchanges, and the transaction generates revenue for the trader in every traded asset.  A range of more complex and nondeterministic bot behaviors exist, described in Appendix~\ref{sec:complexnondeterminism}, but they are outside the scope of this work.

\subsection{Pure Revenue by Example}

\begin{figure}[h!]
    \centering
    \scalebox{0.8}{
\Large{
\ifx\du\undefined
  \newlength{\du}
\fi
\setlength{\du}{15\unitlength}
\begin{tikzpicture}[scale=0.19, every node/.style={scale=0.6}]
\pgftransformxscale{1.000000}
\pgftransformyscale{-1.000000}
\definecolor{dialinecolor}{rgb}{0.000000, 0.000000, 0.000000}
\pgfsetstrokecolor{dialinecolor}
\definecolor{dialinecolor}{rgb}{1.000000, 1.000000, 1.000000}
\pgfsetfillcolor{dialinecolor}
\definecolor{dialinecolor}{rgb}{1.000000, 1.000000, 1.000000}
\pgfsetfillcolor{dialinecolor}
\pgfpathellipse{\pgfpoint{86.428364\du}{57.526682\du}}{\pgfpoint{4.428364\du}{0\du}}{\pgfpoint{0\du}{3.526682\du}}
\pgfusepath{fill}
\pgfsetlinewidth{0.050000\du}
\pgfsetdash{}{0pt}
\pgfsetdash{}{0pt}
\pgfsetmiterjoin
\definecolor{dialinecolor}{rgb}{0.000000, 0.000000, 0.000000}
\pgfsetstrokecolor{dialinecolor}
\pgfpathellipse{\pgfpoint{86.428364\du}{57.526682\du}}{\pgfpoint{4.428364\du}{0\du}}{\pgfpoint{0\du}{3.526682\du}}
\pgfusepath{stroke}
\definecolor{dialinecolor}{rgb}{0.000000, 0.000000, 0.000000}
\pgfsetstrokecolor{dialinecolor}
\node at (86.428364\du,57.871126\du){ETH};
\definecolor{dialinecolor}{rgb}{1.000000, 1.000000, 1.000000}
\pgfsetfillcolor{dialinecolor}
\pgfpathellipse{\pgfpoint{57.800000\du}{42.200000\du}}{\pgfpoint{18.800000\du}{0\du}}{\pgfpoint{0\du}{5.200000\du}}
\pgfusepath{fill}
\pgfsetlinewidth{0.050000\du}
\pgfsetdash{}{0pt}
\pgfsetdash{}{0pt}
\pgfsetmiterjoin
\definecolor{dialinecolor}{rgb}{0.000000, 0.000000, 0.000000}
\pgfsetstrokecolor{dialinecolor}
\pgfpathellipse{\pgfpoint{57.800000\du}{42.200000\du}}{\pgfpoint{18.800000\du}{0\du}}{\pgfpoint{0\du}{5.200000\du}}
\pgfusepath{stroke}
\definecolor{dialinecolor}{rgb}{0.000000, 0.000000, 0.000000}
\pgfsetstrokecolor{dialinecolor}
\node at (57.800000\du,42.372222\du){};
\definecolor{dialinecolor}{rgb}{0.000000, 0.000000, 0.000000}
\pgfsetstrokecolor{dialinecolor}
\node[anchor=west] at (57.800000\du,42.200000\du){};
\definecolor{dialinecolor}{rgb}{1.000000, 1.000000, 1.000000}
\pgfsetfillcolor{dialinecolor}
\pgfpathellipse{\pgfpoint{86.950000\du}{23.466847\du}}{\pgfpoint{31.000000\du}{0\du}}{\pgfpoint{0\du}{9.416847\du}}
\pgfusepath{fill}
\pgfsetlinewidth{0.050000\du}
\pgfsetdash{}{0pt}
\pgfsetdash{}{0pt}
\pgfsetmiterjoin
\definecolor{dialinecolor}{rgb}{0.000000, 0.000000, 0.000000}
\pgfsetstrokecolor{dialinecolor}
\pgfpathellipse{\pgfpoint{86.950000\du}{23.466847\du}}{\pgfpoint{31.000000\du}{0\du}}{\pgfpoint{0\du}{9.416847\du}}
\pgfusepath{stroke}
\definecolor{dialinecolor}{rgb}{0.000000, 0.000000, 0.000000}
\pgfsetstrokecolor{dialinecolor}
\node at (86.950000\du,23.639069\du){};
\definecolor{dialinecolor}{rgb}{0.000000, 0.000000, 0.000000}
\pgfsetstrokecolor{dialinecolor}
\node[anchor=west] at (86.950000\du,23.466847\du){};
\definecolor{dialinecolor}{rgb}{0.000000, 0.000000, 0.000000}
\pgfsetstrokecolor{dialinecolor}
\node[anchor=west] at (86.950000\du,23.466847\du){};
\definecolor{dialinecolor}{rgb}{1.000000, 1.000000, 1.000000}
\pgfsetfillcolor{dialinecolor}
\pgfpathellipse{\pgfpoint{117.806728\du}{42.201682\du}}{\pgfpoint{18.806728\du}{0\du}}{\pgfpoint{0\du}{5.201682\du}}
\pgfusepath{fill}
\pgfsetlinewidth{0.050000\du}
\pgfsetdash{}{0pt}
\pgfsetdash{}{0pt}
\pgfsetmiterjoin
\definecolor{dialinecolor}{rgb}{0.000000, 0.000000, 0.000000}
\pgfsetstrokecolor{dialinecolor}
\pgfpathellipse{\pgfpoint{117.806728\du}{42.201682\du}}{\pgfpoint{18.806728\du}{0\du}}{\pgfpoint{0\du}{5.201682\du}}
\pgfusepath{stroke}
\definecolor{dialinecolor}{rgb}{0.000000, 0.000000, 0.000000}
\pgfsetstrokecolor{dialinecolor}
\node at (117.806728\du,42.373904\du){};
\definecolor{dialinecolor}{rgb}{0.000000, 0.000000, 0.000000}
\pgfsetstrokecolor{dialinecolor}
\node[anchor=west] at (117.806728\du,42.201682\du){};
\pgfsetlinewidth{0.100000\du}
\pgfsetdash{}{0pt}
\pgfsetdash{}{0pt}
\pgfsetbuttcap
{
\definecolor{dialinecolor}{rgb}{0.000000, 0.047059, 1.000000}
\pgfsetfillcolor{dialinecolor}
\pgfsetarrowsend{latex}
\definecolor{dialinecolor}{rgb}{0.000000, 0.047059, 1.000000}
\pgfsetstrokecolor{dialinecolor}
\draw (65.253408\du,37.410091\du)--(73.682693\du,31.993038\du);
}
\pgfsetlinewidth{0.100000\du}
\pgfsetdash{}{0pt}
\pgfsetdash{}{0pt}
\pgfsetbuttcap
{
\definecolor{dialinecolor}{rgb}{0.000000, 0.501961, 0.000000}
\pgfsetfillcolor{dialinecolor}
\pgfsetarrowsend{latex}
\definecolor{dialinecolor}{rgb}{0.000000, 0.501961, 0.000000}
\pgfsetstrokecolor{dialinecolor}
\draw (100.842025\du,31.901467\du)--(109.988726\du,37.454939\du);
}
\pgfsetlinewidth{0.100000\du}
\pgfsetdash{}{0pt}
\pgfsetdash{}{0pt}
\pgfsetbuttcap
{
\definecolor{dialinecolor}{rgb}{0.000000, 0.501961, 0.000000}
\pgfsetfillcolor{dialinecolor}
\pgfsetarrowsend{latex}
\definecolor{dialinecolor}{rgb}{0.000000, 0.501961, 0.000000}
\pgfsetstrokecolor{dialinecolor}
\draw (108.517131\du,46.738664\du)--(90.225215\du,55.672323\du);
}
\definecolor{dialinecolor}{rgb}{0.000000, 0.501961, 0.000000}
\pgfsetstrokecolor{dialinecolor}
\node[anchor=west] at (67.500000\du,48.700000\du){};
\definecolor{dialinecolor}{rgb}{0.000000, 0.000000, 0.000000}
\pgfsetstrokecolor{dialinecolor}
\node[anchor=west] at (65.950000\du,56.050000\du){0.142123};
\definecolor{dialinecolor}{rgb}{0.000000, 0.000000, 0.000000}
\pgfsetstrokecolor{dialinecolor}
\node[anchor=west] at (68.450000\du,48.950000\du){};
\definecolor{dialinecolor}{rgb}{0.000000, 0.000000, 0.000000}
\pgfsetstrokecolor{dialinecolor}
\node[anchor=west] at (67.800000\du,49.050000\du){};
\definecolor{dialinecolor}{rgb}{0.000000, 0.000000, 0.000000}
\pgfsetstrokecolor{dialinecolor}
\node[anchor=west] at (68.850000\du,48.600000\du){};
\definecolor{dialinecolor}{rgb}{0.000000, 0.000000, 0.000000}
\pgfsetstrokecolor{dialinecolor}
\node[anchor=west] at (69.350000\du,48.750000\du){};
\definecolor{dialinecolor}{rgb}{0.000000, 0.000000, 0.000000}
\pgfsetstrokecolor{dialinecolor}
\node[anchor=west] at (50.000000\du,35.000000\du){1.55496e+08};
\definecolor{dialinecolor}{rgb}{0.000000, 0.000000, 0.000000}
\pgfsetstrokecolor{dialinecolor}
\node[anchor=west] at (109.000000\du,35.000000\du){1.55e+08};
\definecolor{dialinecolor}{rgb}{0.000000, 0.000000, 0.000000}
\pgfsetstrokecolor{dialinecolor}
\node[anchor=west] at (95.000000\du,56.000000\du){0.93};
\definecolor{dialinecolor}{rgb}{0.000000, 0.000000, 0.000000}
\pgfsetstrokecolor{dialinecolor}
\node[anchor=west] at (78.000000\du,69.000000\du){};
\definecolor{dialinecolor}{rgb}{0.000000, 0.000000, 0.000000}
\pgfsetstrokecolor{dialinecolor}
\node[anchor=west] at (77.000000\du,69.000000\du){};
\definecolor{dialinecolor}{rgb}{0.000000, 0.000000, 0.000000}
\pgfsetstrokecolor{dialinecolor}
\node[anchor=west] at (42.000000\du,41.000000\du){Trade \#1 (Tokenstore)};
\definecolor{dialinecolor}{rgb}{0.000000, 0.000000, 0.000000}
\pgfsetstrokecolor{dialinecolor}
\node[anchor=west] at (42.000000\du,43.700000\du){1.09409e+09 FREE/ETH};
\definecolor{dialinecolor}{rgb}{0.000000, 0.000000, 0.000000}
\pgfsetstrokecolor{dialinecolor}
\node[anchor=west] at (57.800000\du,42.200000\du){};
\definecolor{dialinecolor}{rgb}{0.000000, 0.000000, 0.000000}
\pgfsetstrokecolor{dialinecolor}
\node at (86.950000\du,20.050000\du){Free Coin (FREE)};
\definecolor{dialinecolor}{rgb}{0.000000, 0.000000, 0.000000}
\pgfsetstrokecolor{dialinecolor}
\node at (86.950000\du,24.050000\du){0x2f141ce366a2462f02cea3d12cf93e4dca49e4fd};
\definecolor{dialinecolor}{rgb}{0.000000, 0.000000, 0.000000}
\pgfsetstrokecolor{dialinecolor}
\node[anchor=west] at (80.950000\du,37.050000\du){};
\definecolor{dialinecolor}{rgb}{0.000000, 0.000000, 0.000000}
\pgfsetstrokecolor{dialinecolor}
\node[anchor=west] at (86.950000\du,23.466847\du){};
\definecolor{dialinecolor}{rgb}{0.000000, 0.000000, 0.000000}
\pgfsetstrokecolor{dialinecolor}
\node[anchor=west] at (102.250000\du,41.000000\du){Trade \#2 (Tokenstore)};
\definecolor{dialinecolor}{rgb}{0.000000, 0.000000, 0.000000}
\pgfsetstrokecolor{dialinecolor}
\node[anchor=west] at (102.200000\du,43.700000\du){1.66667e+08 FREE/ETH};
\definecolor{dialinecolor}{rgb}{0.000000, 0.000000, 0.000000}
\pgfsetstrokecolor{dialinecolor}
\node[anchor=west] at (117.806728\du,42.201682\du){};
\pgfsetlinewidth{0.100000\du}
\pgfsetdash{}{0pt}
\pgfsetdash{}{0pt}
\pgfsetbuttcap
{
\definecolor{dialinecolor}{rgb}{0.000000, 0.000000, 1.000000}
\pgfsetfillcolor{dialinecolor}
\pgfsetarrowsend{latex}
\definecolor{dialinecolor}{rgb}{0.000000, 0.000000, 1.000000}
\pgfsetstrokecolor{dialinecolor}
\draw (82.734494\du,55.549106\du)--(66.451501\du,46.831728\du);
}
\end{tikzpicture}
}}
    \vspace{-8mm}
    \caption{Example pure revenue opportunity observed in Ethereum transaction 0xc889bd13594f75e4dd824f04f0c2ad03896cb7ec6518df02455e9560367bb9c4, exploiting an orderbook cross on the TokenStore DEX.  Edges of the same color are generated by a single trade. As a pure revenue opportunity, it generates net profit in both FREE and ETH.}
    \label{fig:purerevenue}
\end{figure}
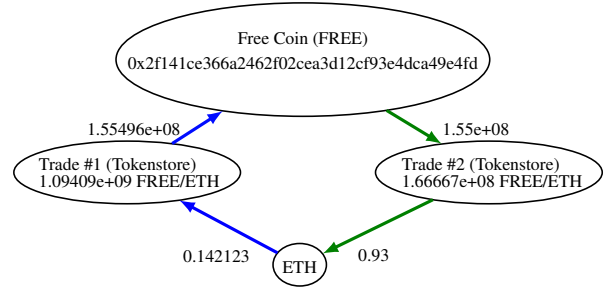

\begin{figure*}[ht!]
    \centering
    \includegraphics[width=\textwidth]{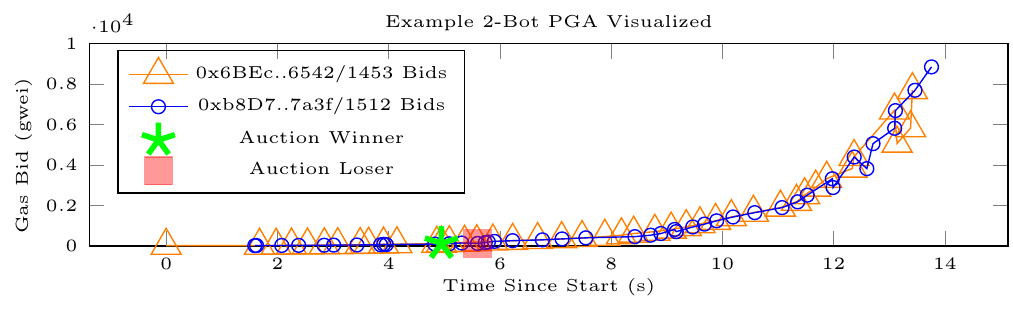} \\
    \scalebox{0.7}{
        \hspace{-6mm}
        \begin{tabular}{c|c|l|c|l}
\specialcell{\textbf{Seconds}\\\textbf{Elapsed}} & \textbf{Quantity @ Price Bid} & \hfil\textbf{Ethereum Transaction Origin (Public Key Hash)} & \textbf{Nonce} & \hfil\textbf{Transaction Hash} \vspace{1mm}\\
0.000 & 192085 @ 25.10 & \color{orange} 0x6BEcAb24Ed88Ec13D0A18f20e7dC5E4d5b146542 \color{black} & 1453 & 0xd32653ca9694a6d1299335f3c04f74cc159bee48c1d32d3a421db08c638ffc78 \\
1.593 & 231520 @ 25.00 & \color{blue} 0xb8D76f4BC2518F8eb508bf0Ccca76f8F9DD57a3f \color{black} & 1512 & 0xb901e6dc2c229fd9105448fcc23eaebdedb476c21b6c6e7ddf8d2df4e838d2c7 \\
1.624 & 231520 @ 28.75 & \color{blue} 0xb8D76f4BC2518F8eb508bf0Ccca76f8F9DD57a3f \color{black} & 1512 & 0x9f592504eb71a7452b7a395a7f5ecd34eaa5d090da1162e74221562af54c8f67 \\
1.679 & 227534 @ 28.81 & \color{orange} 0x6BEcAb24Ed88Ec13D0A18f20e7dC5E4d5b146542 \color{black} & 1453 & 0x83e2a6774654a9540c3fad8837afcc88b4c932ab2374819254f887305c3a4b22 \\

... & ... & ... & ... & ... \\

4.949 & 227534 @ 134.02 & \color{orange} 0x6BEcAb24Ed88Ec13D0A18f20e7dC5E4d5b146542 \color{black} & 1453 & \color{money} 0xc889bd13594f75e4dd824f04f0c2ad03896cb7ec6518df02455e9560367bb9c4 \color{black} \\

5.599 & 231520 @ 133.76 &  \color{blue} 0xb8D76f4BC2518F8eb508bf0Ccca76f8F9DD57a3f \color{black} & 1512 & \color{red} 0xaa86d782328c0c9c422e3f2a3170ff41ae21a27ad395c48db76b0080898f85db \color{black} \\

... & ... & ... & ... & ... \\

13.383 & 227534 @ 5834.77 & \color{orange} 0x6BEcAb24Ed88Ec13D0A18f20e7dC5E4d5b146542 \color{black} & 1453 & 0xb0dc97140394c5f65332ebc459d5e66f89099dbb4d335c866b32280270102858 \\
13.416 & 227534 @ 7716.48 & \color{orange} 0x6BEcAb24Ed88Ec13D0A18f20e7dC5E4d5b146542 \color{black} & 1453 & 0x1825be6951577e72a1dafc8de564ce1ccfe5d284173e11e77b2e7f6b1b44571c \\
13.462 & 231520 @ 7701.08 & \color{blue} 0xb8D76f4BC2518F8eb508bf0Ccca76f8F9DD57a3f \color{black} & 1512 & 0xa9823358c99149f0e6343c604c35988468d01d02868437d8251b3cee282dc92b \\
m13.759 & 231520 @ 8856.24 & \color{blue} 0xb8D76f4BC2518F8eb508bf0Ccca76f8F9DD57a3f \color{black} & 1512 & 0x366c30a534b5f3d8a6d251f97d401997624d1fe8d3af07ede4d19105dc970942 \\
    \end{tabular}}
    \caption{One example PGA that was observed over the Ethereum peer-to-peer network, resulting from the profit opportunity in Figure~\ref{fig:purerevenue}.  The top graph shows the gas bids of two observed bots over time, while the bottom table details the first and last two bids placed by each bot and the two mined bids (center).}
    \label{fig:exampleauction}
\end{figure*}

Figure~\ref{fig:purerevenue} shows one example of a pure revenue transaction, executed on November 15, 2018.   In this transaction, two trades are executed on a decentralized exchange, TokenStore, which features a design conceptually similar to that of Etherdelta.  The first executed trade buys Free Coin, an obscure token.\footnote{As of writing on Mar 13 2019, Free Coin is listed at currency rank 303 by market cap https://coinmarketcap.com/currencies/free-coin/.} By inspection, this difference in rates is a clear result of someone using the exchange API and committing an off-by-one error, offering to buy tokens at 10x the market rate.  This created a cross in the order book (sell order at more than a buy order would pay), which when both executed by the same arbitrage counterparty, generates the flow of funds in Figure~\ref{fig:purerevenue}.  While this opportunity probably arose from a typo, a variety of revenue sources exist.  For example, inconsistent price feeds, or variance across exchange designs that respond to market movements at different speeds can also create pure revenue.

Note that both orders are executed inside a single Ethereum contract, and are executed in an atomic batch through a smart contract proxy.  In general, we model each opportunity as a graph as shown in Figure~\ref{fig:purerevenue}.  Edges represent currency flows, and nodes are either exchanges nodes (which receive one currency and output another at the stated exchange rate), or asset nodes (which are either sources or sinks for same-color trade subgraphs, depending on whether they are bought or sold).  We label exchange nodes with the exchange name and trade rate, and asset nodes with the asset symbol.

The net revenue in each traded asset is the sum of inflows minus outflows in the corresponding asset node.  In the trade in Figure~\ref{fig:purerevenue}, the revenue for ETH was therefore $0.93 \text{ ETH}-0.14123 \text{ ETH} \approx .79 \text{ ETH}$, or approximately 267 USD in equivalent value (at November 15, 2018 prices of 338.15 USD per ETH).  Approximately 496,000 residual FREE was also left in the arbitrageur's account, although less significant and liquid than the ETH revenue.  

To calculate the transaction profit, we subtract the cost from the revenue, in this case the gas paid by the transaction that was mined for the arbitrageur.  Transaction 0xc889...b9c4 paid a gas price of 134.02 Gwei (the canonical unit for representing gas rates; $1$ Gwei = $10^{-9} \text{ ETH}$).  The transaction used $113,265$ gas, approximately equivalent to $113,265$ computational steps.

It is worth noting that transactions that cancel their bids can use less gas, as they pay only for attempted execution (execution requires costly computational steps).  This will be relevant to later models in Section~\ref{sec:modelproperties}.  Furthermore, complex conditional transactions, price queries, or other computationally intensive order preferences will obviously cost more, making optimizing for gas consumption important in the development of competitive bots.  The total cost of this transaction was therefore $113265 \cdot 134.02 \text{ Gwei} = 0.01518$ \text{ETH}, or around $5.13$ USD at the time of the transaction.  The associated profit was therefore $\approx .79-.01518=.77 \text{ ETH}$, or 267 USD.

\subsection{PGAs, Ordering Fees, and... HFT?}

Ethereum transactions are routed in a peer-to-peer gossip multicast protocol by client node software.  This means that all transaction information is available to all participants in this network, but earlier to participants with advantageous positions in the gossip topology. Additionally, nodes can simulate the outcome of every transaction given the current or expected system state.  Once an arbitrage transaction is submitted, therefore, the sequence of trades it involves is publicly known by the network's peer-to-peer nodes.

A natural question then arises: how is priority determined between arbitrageurs?  Because each pure profit opportunity carries some computable profit $p$ and is broadcast globally, a competitive game naturally ensues among  arbitrage bots to be the first to execute an atomic transaction that exploits the opportunity.  The mechanics of the system dictate that all subsequent transactions in the game will fail.  How this particular game plays out depends on the peer-to-peer relay network mechanics of the underlying blockchain, as well as mining pool strategies and order book designs in the underlying exchanges.  We present a greatly simplified model of this game formally in Section~\ref{sec:modelproperties}.

In Ethereum, the PGA game we observe consists of transactions issued with the same (account, nonce) pair, expressing a bid to a miner, where the miner is paid gas fees as described in Section~\ref{sec:ethgasbackground}.  We call this interaction of issuing repeated bids a \emph{priority gas auction}, or PGA.

To place a bid on an arbitrage opportunity, an arbitrageur simply issues a smart contract transaction with associated gas price $g$ that atomically bundles multiple trades (performing the arbitrage). If they later wish to increase this bid, either for strategic reasons or because they notice a higher bid that would eliminate their profit issued on the peer to peer relay network, they simply reissue the transaction with the same nonce and a higher gas price $g'$.  If two arbitrageurs are bidding against each other, what emerges is essentially an auction, as shown in Figure~\ref{fig:exampleauction}.  It is in a miner's interest to order first whichever transaction offers the highest price (if the miner does not plan on itself arbitraging the market). This approach will encourage arbitrageurs to bid each other up for transaction priority.

Figure~\ref{fig:exampleauction} shows the action that occurs in such an auction from the point of view of the network.  Each row in the associated table is a transaction observed by our Ethereum monitor on the peer to peer network, using the experimental harness described in Section~\ref{sec:prevalence}.  In this example, we see two accounts, {0x6BEc...6542} with nonce 1453 and {0xb8D7...7a3f} with nonce 1512, bidding against each other for priority.  These accounts issue transactions with ever increasing gas prices: bot 0x6BEc...6542 issues 42 transactions in 13.4 seconds, and 0xb8D7...7a3f issues 43 transactions in 12.1 seconds.  In 4.94 seconds, the auction is over.  The transaction that eventually ends up mined with priority is shown with a green star, and is transaction 0xc889...b9c4 issued by bot 0x6BEc...6542.  This bot is considered the \emph{winner}, and pays the full gas price to reap the revenue as described in the previous section.  The transaction was mined in block 6709727 by the mining pool ``MiningPoolHub".   The transaction shown in the red square, with hash 0xaa86...85db by bot 0xb8D7...7a3f, is also mined and included in the final block, as each (account, nonce) pair can include one transaction in the next block to be mined as per the protocol.  Miners are incentivized to include even failed transactions, as these transactions pay for attempted execution.  Note that while the winning transaction used 113,265 gas in execution, the losing transaction paid for 33,547 gas units, a far smaller sum.  In game theoretic terms, each auction represents a variant of \emph{all pay} auction, where instead of paying their full bid, the loser is forced to pay a percentage to the miner.
\section{Arbitrage Prevalence Measurement Study}
\label{sec:prevalence}

We now present empirical measurements of the prevalence of arbitrage. We discuss our methodology (and its limitations).

\subsection{Experimental Setup}

\begin{figure}[h]
    \centering
    \input{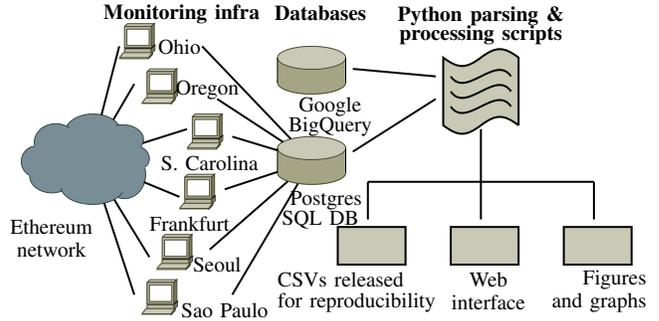}
    \caption{Deployed PGA measurement infrastructure architecture.}
    \label{fig:pgainfra}
\end{figure}

We first describe the experimental harness used to observe PGA transactions, displayed in Figure~\ref{fig:pgainfra}.  On-chain data is not sufficient to analyze PGAs, as all but the final ``winner'' transaction are discarded by the network.  Furthermore, because transactions are replaced so quickly and nodes don't propagate replaced transactions, many transactions are never even propagated to all Ethereum nodes.

No tools existed for analysis of unconfirmed and rejected transactions, so we wrote our own.  We forked the Go-Ethereum client to record unconfirmed transaction in the mempool.  We deployed six geodistributed nodes across multiple data centers, with timestamps synchronized to the nanosecond level by NTP. We collected an observation every time one of the $\approx256$ nodes peered with one of our deployed modified nodes relayed a transaction to us.  We collected nine months of data, amounting to over 300 gigabytes, including 708,385,840 unique observations of PGA arbitrage bots.  Node locations are shown in Figure~\ref{fig:pgainfra}, although not all nodes were online throughout the experiment. (We ultimately sacrificed timing resolution to reduce prohibitive ongoing costs.)

Because it is technically infeasible to store observations of every Ethereum transaction, we focus on a list of suspected arbitrage bot transactions.  This list is seeded with accounts we observe performing pure revenue transactions on the blockchain, and is updated dynamically any time a high-value gas replacement transaction is seen at an order of magnitude over current gas market price.  We also built a web interface to manage the uptime of our nodes and associated bot lists.

We supplemented this mempool data we collected on PGAs with on-chain data sourced from Google's BigQuery Ethereum service (as well as other on-chain metadata, like block timestamps), allowing us to parse logs for successful transactions and determine their profit.  We also used daily price data from \url{coinmetrics.io} for USD conversions.

We then developed a suite of Python scripts to combine and analyze this data.  These scripts use a heuristic to place all observations on a timeline, identifying a PGA whenever a high-value gas replacement transaction occurs.  All transactions in a time interval around this observation are then considered part of the ``auction," and broken down per bot.  The scripts also aggregate meta-statistics on PGAs, calculating strategy and latency trends in observed bots.  This data is used to generate both our web interface and the figures in this paper.  We release all source and processed/derived data in CSV format for further analysis by the community on our Github.

\subsubsection{Instrumentation Limitations}
\label{sec:limitations}

The above instrumentation has limitations that may affect our data quality.  We describe them at a high level here, and mention any impact these limitations may have on our derived results throughout.

It is possible that a transaction will be replaced before reaching our nodes, though this seems unlikely, 
since each of our observed transaction tends to have hundreds of associated observations. More critically, our time-slicing method for identifying PGAs may lump together unrelated bot activity into a single ``auction.''  Manual inspection suggests that because auctions are relatively infrequent (every few hours), the majority consist of correlated bot activity.  Time-slicing could also harvest unrelated transactions from other arbitrage bots that are not PGA behavior; we prune these in our aggregate statistics by only including bots that have placed at least 4 ``bids'' in an observed PGA, and as a result may lack data on bots that place $<4$ bids.  We also may miss bots whose addresses are not in our PGA lists, leading to missing bidders in certain auctions.

Lastly, our instrumentation calculates pure revenue opportunities by parsing transaction logs on supported exchanges for transactions that contain more than two trades executed by a smart contract.  We support only a limited subset of popular DEXes, omitting revenue opportunities on unsupported exchanges.  Because we aim to establish a lower bound (i.e. a potentially too conservative result), we feel this is acceptable.  Our supported exchanges include the top five DEXes by sustained volume at the time of infrastructure development.

This subset still proves sufficient to provide substantial insight on the PGA market not afforded to regular nodes in a blockchain context, highlighting the limits of the ``transparency" afforded users by these systems.

\subsection{Observations}

We now describe key results of our data gathering.

\begin{figure}[h]
\scalebox{0.9}{\includegraphics{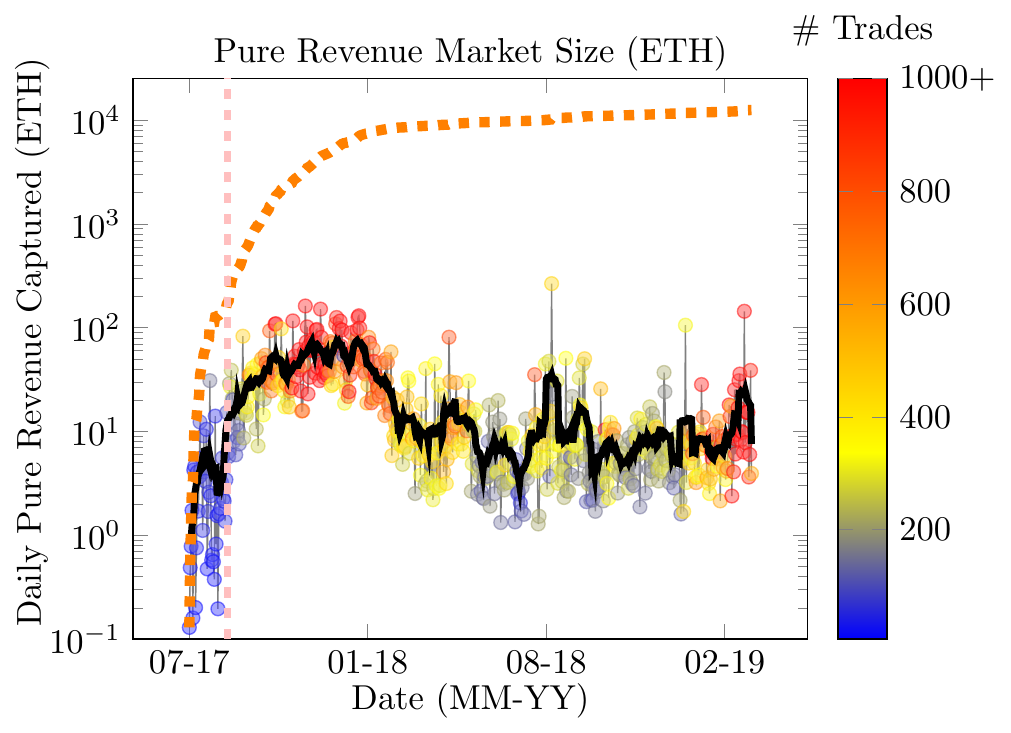}}
\vspace{-4mm}
\caption{Lower bound size of the observed pure revenue market, ETH.  Black line shows a 14-day moving average of market size, while the dotted orange line plots cumulative market size.  Scatter plot colorings/shading indicates number of daily pure revenue trades.  The pink vertical line shows the release of~\cite{blogpost} associated with early versions of this work becoming public.}\label{fig:purerevenueethscatter}
\end{figure}

\begin{figure}[h]
\scalebox{1.0}{\includegraphics{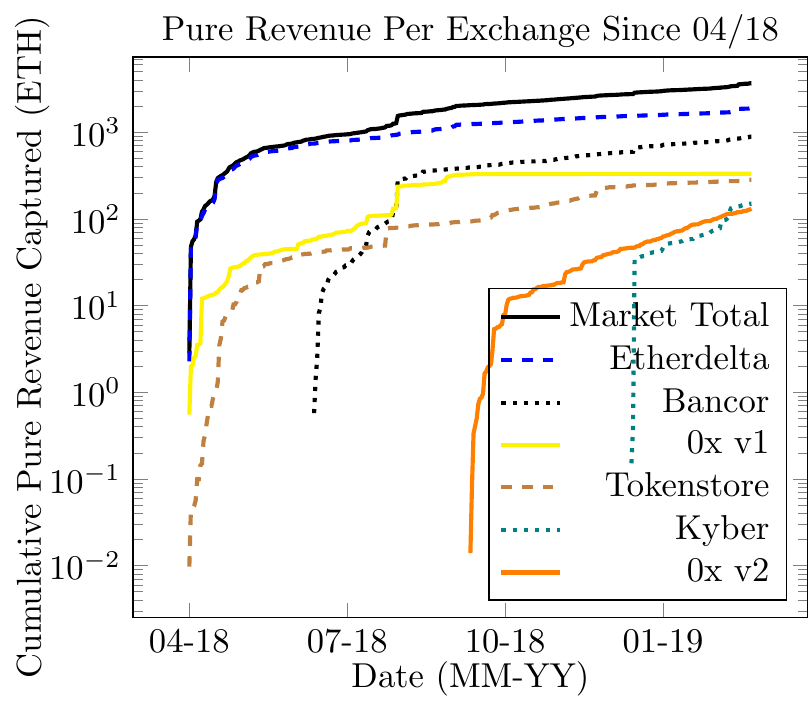}}
\vspace{-3mm}
\caption{Top exchanges by cumulative pure revenue offered bots since 04/18.}\label{fig:purerevenueexchbreakdown}
\end{figure}

\begin{figure*}[h]
\scalebox{1.0}{\includegraphics{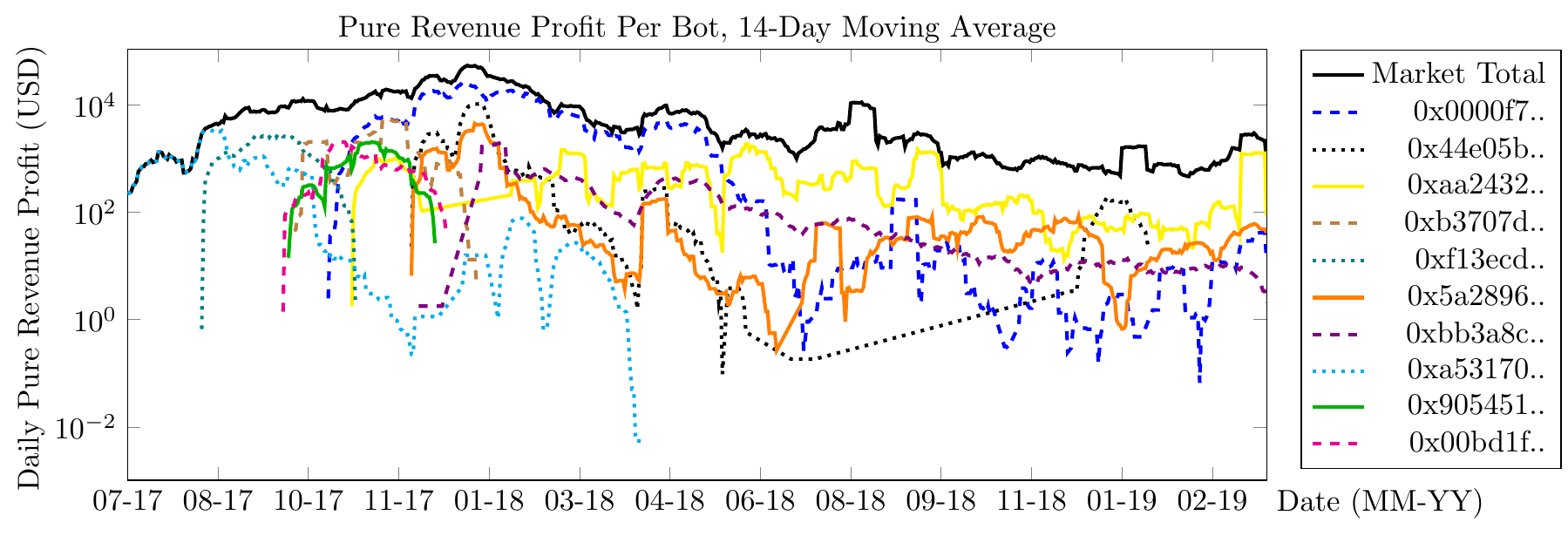}}
\vspace{-7mm}
\caption{Profit (revenue minus estimated costs) for pure revenue bots over time. We observe that the median profit is 65\% of the size of the pure revenue opportunity across 138,948 observed pure revenue transactions.}\label{fig:purerevenuebotprofit}
\vspace{-4mm}
\end{figure*}

\begin{figure}[h]
\scalebox{0.9}{\includegraphics{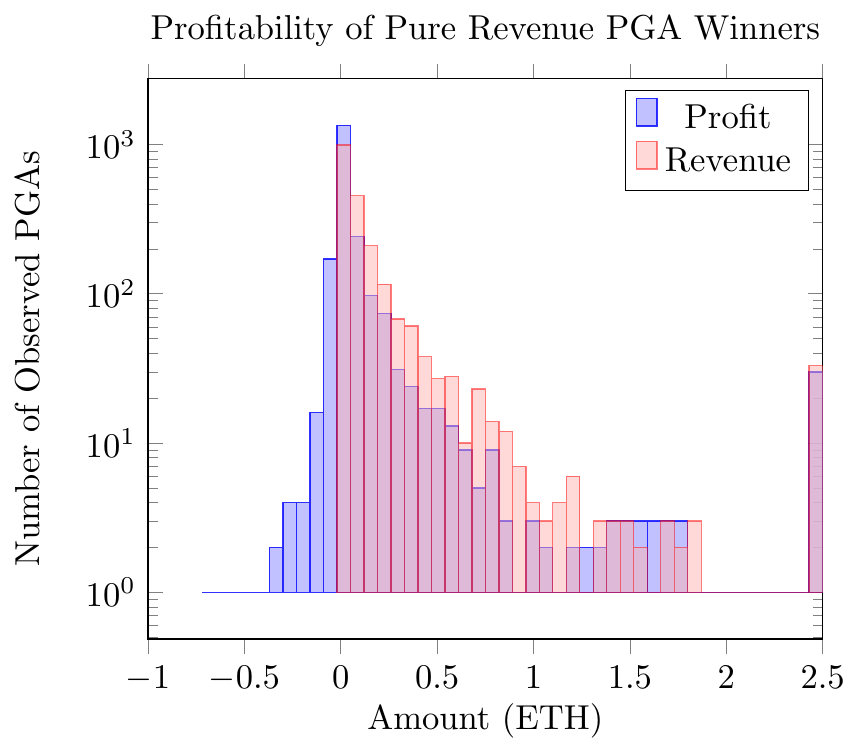}}
\vspace{-2mm}
\caption{Distribution of profit and revenue in competitive PGAs.  We focus on amounts from -1 to 2.5 ETH, which by inspection comprise the majority of the distribution body.  29 of 2,135 profits fall outside this range.}\label{fig:revenueprofithist}
\vspace{-4mm}
\end{figure}

Figure~\ref{fig:purerevenueethscatter} shows the size (breadth) of the arbitrage market, denominated in ETH.  We choose to use ETH denominations in the rest of the work as they display a relatively consistent distribution, as shown in the figure.  USD revenues are more volatile, as is visible in the market revenue graph discussed in Appendix~\ref{sec:purerevenueusd}.

This distribution is consistent with common understanding of arbitrage as sourced from market structure design, as ETH is the most liquid traded token on this market.  As seen in this graph, after the initial market development in late 2017, a relatively active period of arbitrage occurred during which bots often performed over 1,000 daily trades for a daily revenue of 10-100 ETH.  Later, the market matured into a more steady and consistent profit distribution, with 1-10 ETH available for arbitrage daily.  Recently, an increasing number of pure revenue trades is observed, showing the trend of the DEX market to smaller retail transactions and more efficient market designs, decreasing the average opportunity size but presenting more frequent opportunities to a more efficient bot market.  The limitations in our instrumentation mean, however, result in an underestimate of market size, as described in Section~\ref{sec:limitations}.

Interestingly, Figure~\ref{fig:purerevenueethscatter} also highlights the origin story of this work.  In August 2017, we published a preliminary report on the dangers of decentralized arbitrage and associated bot designs in~\cite{blogpost}.  The date of our blog release is shown as the vertical line on the graph.  As part of this post, we actively probed the Etherdelta orderbook for pure revenue opportunities, measuring available profit at 4,472.75 USD / day or 1.6M+ USD / year.  We executed our own trading bot to confirm the technical feasibility of profit, observing a total 58\% rate of success in capturing pure revenue for our own bots (soon outcompeted by other bots after the release of our post).  We calculated the expected daily profit of an arbitrage bot at the time, before most decentralized exchanges were online, at 0.32 ETH/day $\cdot 58.3\%$ success $\cdot 45 - 0.004 \cdot (1 - 58.3\%)$ failure cost * $45$ observed opportunities = 8.32 ETH (2500 USD) daily profit. Note in Figure~\ref{fig:purerevenuebotprofit} that the majority of profitable bots today joined shortly after our public release, which inadvertently sparked a thriving cottage bot economy!

Figure~\ref{fig:purerevenueexchbreakdown} shows the breakdown of the pure revenue market by exchanges, since 04/18 (the first significant pure revenue opportunities outside Etherdelta, representing new exchanges coming online).  An oligopolistic pure revenue market is observed, with Etherdelta generating the majority of observed pure revenue.  Nonetheless, a range of other exchanges furnishes a growing and relatively consistent distribution of opportunities for bots.  Because we only support certain exchanges, unsupported exchanges may be missing from this graph, and the market total provides a lower bound only.

Figure~\ref{fig:purerevenuebotprofit} shows the top ten transaction senders we observe in the competitive pure revenue market, and their associated profits.  The cost we measure for these bots includes only the gas fees for the pure revenue transactions they make, and thus may underestimate their total fees (e.g. for failed transactions).  We also do not include any bots that do not trade on our supported exchanges.  Note that we make no effort to group transaction senders together to consolidate multiple senders operated by a single entity; some heuristics are available for doing this (e.g. monitoring use of similar arbitrage contracts), but we leave such aggregation to future work.  
We focus on USD on this graph, as we believe that most actors use fiat currencies to pay participation costs.

As in the exchange breakdown, this figure suggests an oligopolistic market, with single bots often dominating the profit space for extended time periods (for example, 0xaa24... dominates the recent pure revenue space, where 0x0000... experienced a long period of dominance through late 2017 and early 2018).  This diagram also shows the behavior of top bots exiting the market after failing to update their strategies.  For example 0xa53... exits around 03/18, and 0xf13... exits in late 2017, early in the market.  Nonetheless, many bots enjoy extended runs of profitability, and continue operating for years.

Interestingly, the revenue and profit graphs for these bots have nearly identical shapes (the revenue graph is described in Appendix~\ref{sec:purerevenueusd}). Profit-to-revenue ratios are relatively well distributed, with a median of 65\% of the revenue of an opportunity captured by the winning bot according to our profit heuristics.  Most opportunities are also uncontested, providing the bots a steady stream of income.

Thusfar, we have focused exclusively on pure revenue opportunities, which may lead to PGAs or may be claimed by a single bot uncontested.  A natural question becomes whether the profitability of opporutnities persists in a competitive market, or whether competition among bots creates a generally unprofitable zero-sum or negative-sum environment.

Figure~\ref{fig:revenueprofithist} describes the breakdown between the profit and revenue of observed pure revenue opportunities involved in priority gas auctions, indicating that at least one bot placed a competitive gas replacement bid to capture such an opportunity.  While there is a spike around 0 profit, indicating that many PGA opportunities \emph{are} zero or negative sum for their players, the vast majority of these opportunities see relatively insignificant costs, and the profit distribution still provides a mean profit to winning bots that offers them a majority of associated revenue.  This confirms our suspicious that bots may be engaging in uncoordinated cooperation to maintain the profitability of the PGA market at the expense of miners, which we model fully in Section~\ref{sec:model}.

\begin{figure}[h]
\scalebox{0.9}{\includegraphics{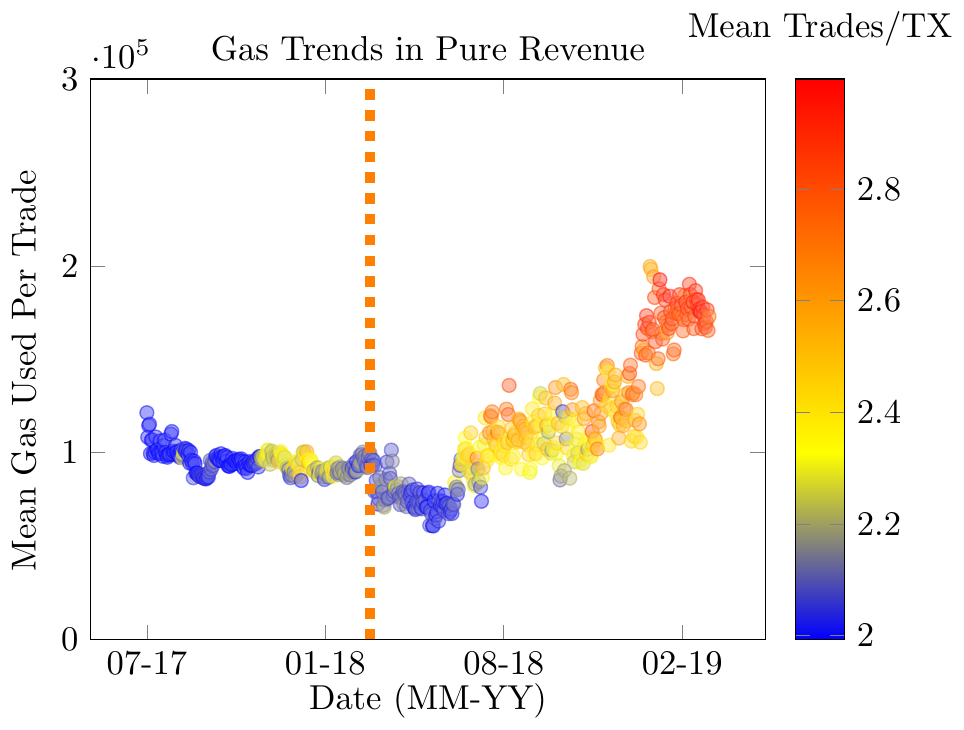}}
\vspace{-7mm}
\caption{Trends in quantity of gas used per trade by observed pure revenue transactions. The vertical line indicates the release date of a gas optimization token by the authors and others as part of this study, described in~\cite{gastoken}.}\label{fig:gastrendsscatter}
\end{figure}

A final natural question to ask about the market is the extent to which the gas used in a transaction (representing the ``quantity bid'' in a PGA) is an important competitive optimization vector for pure revenue bots.  Figure~\ref{fig:gastrendsscatter} shows this optimization over time.  Before 04/18, primarily Etherdelta trends are observed.  As we expect, we see bots optimizing their gas costs with a clear downward trend over time, despite relatively consistent execution of primarily pure revenue transactions containing two trades per transaction.

The advent of more sophisticated exchanges shows market maturation, with more complex opportunities executed that require more gas per trade.  The number of trades per transaction has also increased, now averaging well over two trades per transaction.  Part of this trend is due to ``liquidity pooling" exchange designs like Kyber, which allow use of an exchange contract as a proxy to trade on other exchanges.

The dotted orange line on this graph shows the authors' public release of a token called GasToken~\cite{gastoken}. It leverages a feature in Ethereum's incentive model enabling arbitrageurs to perform \emph{gas arbitrage} over time, banking gas at below market rates and deploying it to win PGAs. The mechanism by which this gas arbitrage is performed yields a transaction refund in the quantity bid to miners, tricking miners into accepting smaller than expected bids.  Because this allows bots to bid higher gas prices for less quantity at the same level of cost, this token is now a requirement for participating competitively in the pure revenue arbitrage market.  This is reflected in our trends graph as a sharp downtrend in gas quantity bots demand from miners after release of our token.  This anecdote provides unusual insight on consequences of performing active experiments on emerging competitive markets.

\subsection{User Comments}

One unique feature of Ethereum over other market designs is the public nature of some of its trading data.  For example, the addresses of the arbitrageurs in Figure~\ref{fig:purerevenuebotprofit} are known, and can link different trades by a single actor.  This often allows users to see which arbitrage bot targeted a transaction they may have made, and to comment on bot activity on generic public comment pages on blockchain explorers.  The top sender in Figure~\ref{fig:purerevenuebotprofit}, for example, appears to often profit off typographical errors users make in their decentralized exchange trades.\footnote{These comments were sourced from \url{https://etherscan.io/address/0x0000F7F39325076881E5fC566E99595542532aE2\#comments}, though similar pages exist for several of the bots in Figure~\ref{fig:purerevenuebotprofit} and our data.}  One user, ``Getcoin Hub Inc.", pleads with this arbitrage bot, saying ``This was obviously a mistake transaction - any chance you can find it in your heart to send it back?".  Another, Benjamin Huffman, pleads with the bot that ``I am a single parent trying to make ends meet at a job I hate. Please have some mercy."  Yet another, Alfie, leaves a comment requesting the return of their ETH, stating "Please I ask you to send me the ETH back as I really need it to continue my education. I might need to sell my car to pay what is already due for this semester."  Worse still, one user, Rajesh Kumar, claims ``sir this was error order. please give back eth sir. this all of my village rupee I trade for them. if i do not receive back eth i will be in the **** sir please".

While it is impossible to verify the authenticity of these anonymous comments, their existence does point to the effect that exchanges with bad usability properties and flawed market designs can have on the guarantees provided their users.  The existence of multiple users losing a large chunk of funds through such flawed exchange designs embodies the need for carefully considered and formalized guarantees for DEXes.
\section{Priority Gas Auction (PGA) Modeling}
\label{sec:model}

To shed light on the strategies we observe in practice, we present a formal model for PGAs in this section. We then explain the notion of {\em advantage} that we use to understand strategies' effectiveness, and study several classes of observed strategies in the context of our model. 

We provide informal, high-level intuition about the games both miners and bots play in Appendix~\ref{sec:intuition}.

\subsection{Model properties}
\label{sec:modelproperties}

We model a PGA as a sequential game among a set of $n$ players $\{\player_0, \player_1, \ldots, \player_{n-1}\}$ who bid against one another to obtain a payoff of $\payoff$.\footnote{In this ``dollar auction,'' $\payoff$ represents a normalization of any payoff amount w.l.o.g., and includes gas costs for the winner.} 

This game has the following properties, which model blockchain dynamics:

\vspace{4mm}
\paragraph{\bf Continuous time} Players act in continuous time, rather than discrete rounds (as in typical extensive-form games). This is because blockchain networks are asynchronous. 

\paragraph{\bf  Imperfect information} Players eventually see one another's bids, but not immediately, a feature modeling blockchain network {\em latency}. Player $\player_i$ observes other players' bids after some fixed time $\latency_i$. Players may have differing latencies, small ones conferring a competitive advantage. 

We measure latency as a relative figure to the latency of the miners, which are typically the best connected nodes in the network. Thus $\latency_i = 0$ indicates that $\player_i$ has a superior network position and observes other player's bids with no additional delay over that of the miners. 

\paragraph{\bf  All pay} In PGAs, losing players pay gas costs for their failed transactions. Our model captures this cost by having a losing player $\player$ pay $\lossmult(\bid_{\sf last})$, where $\bid_{\sf last}$ is the last bid made by $\player$ and $\lossmult()$ is a {\em loss function}.\footnote{As we explain above, bots use various tricks to reduce $\lossmult$.} This type of all-pay auction in which players can submit multiple bids is known as a {\em dollar} auction~\cite{shubik1971dollar}, in which not only the winner, but losers pay. In our study, we typically observe auctions that are {\em partial all-pay}, i.e., $\lossmult(\bid_{\sf last}) < \bid_{\sf last}$.

\paragraph{\bf  Probabilistic auction duration} The auction terminates at a randomly determined time, namely when the next block is mined. For proof-of-work blockchains such as Ethereum, we model a block interval as a random variable $\blockinterval$, which is exponentially distributed. 

\paragraph{\bf  Rate-limited bidding} A player cannot raise her own bids continuously, but must wait a short interval $\interval$ to do so. This reflects throttling that blockchain peer-to-peer networks perform to prevent flooding attacks. As miners are always free to include the highest paying transaction that they observed, players can only raise bids, not lower them (a natural auction feature that holds in Ethereum PGAs).

\paragraph{\bf Minimum starting bid}
While there is no required minimum gas price in Ethereum, in practice PGA participants will want to start their bid at a price that gives them a good chance of getting included in the next block. Put another way, if one bids too low, then even if nobody else bids against them, a miner may not include the transaction in their block. Thus, although not strictly enforced, we model our auctions as having a minimum starting bid, $s$.

\paragraph{\bf Minimum bid increments}
 Players do not have freedom to raise bids by arbitrary increments. Instead, there is a minimum bid raise, $\iota$, measured as a function of the player's previous bid. This matches Ethereum dynamics in which one can replace their transaction with another one, but it will only be relayed by the peer-to-peer network if the gas price has increased by a minimum threshold.  Parity, the most common Ethereum node software, enforces a default minimum increment of $12.5\%$.

 Importantly, the bid increment restriction for $\player_i$ is entirely a function of the $\player_i$'s previous bid, but there is no enforced relationship between the bids of competing players. That is, players can outbid one another by arbitrarily small quantities so long as they are sufficiently raising their own previous bid.

\vspace{4mm}
\noindent In practice, network latencies and transaction-interval times are stochastic, but we assume constants $\latency_i$ and $\interval$ to simplify our analyses. We believe that this simplification does not significantly affect strategy outcomes. 

Continuous-time, imperfect-information games have seen only limited study. Most similar to our work is {\tt FlipIt}, a continuous-time game with imperfect information that models advanced persistent threats (APTs)~\cite{van2013flipit}. We are unaware of any prior work on such games with random play times or in general with the structure of PGAs.

\subsection{Formal PGA model}

For simplicity, we focus on games where $n=2$, i.e., that involve a pair of players $(\player_0, \player_1)$. Our formalism, though, can be generalized to any $n$. The restriction to two players is supported by our empirical observations, which shows that most PGAs were played out between two players.

We let $\bidtuple = (t, \bid; \pindex)$ denote a bid by player $\player_i$. Here $\timet$ is the time at which the bid is placed, \bid is the bid price, and $\pindex \in \{0,1\}$ is the identity of the bidder.

We denote by $\bidtuples^*$ the sequence of all bids published to the network by all players at the current time $\currenttime{}$, and $\bidtuples$ the full sequence of all bids made by all players up to the point at which the auction ends. We let $\bidtuples[t]$ denote $\bidtuples$ at a particular time $t$. We let $\bidtuples_{i}$ and its variants (e.g., $\bidtuples_{i}[t]$) denote a bid sequence of player $\player_i$. 

A {\em strategy} $\strategy_{\pindex}$ for player $\player_{\pindex}$ is a procedure for participating in a PGA, and may be probabilistic (``mixed,'' in game-theoretic terminology).  $\strategy_{\pindex}$ takes the following form, where input $\currenttime{}$ is the current time and $\playstate_{\pindex}$ is the current local state of $\player_{\pindex}$:

\[(\move,\waketime, \playstate') \sample \strategy_{\pindex}(\bidtuples^*, \currenttime, \playstate).\]

\noindent The output $\move$ is an action by $\player_{\pindex}$. Either $\move = \bidtuple$ for some bid $b$, or else $\move = \bot$, indicating the player is not placing a bid. The output $\playstate_{\pindex}'$ represents an internal state update for the player. 

Finally, $\waketime \geq \currenttime{}$ is a {\em wake time}. This is the time when $\player_{\pindex}$ schedules its next execution {\em assuming no bids appear in the meantime}. The use of wake times means that w.l.o.g., we can assume that a player always emits a bid $\bidtuple = (\timet^*, \bid; \pindex)$, i.e., for immediate publication. We also assume that when executed at an emitted wake time, a player always emits a bid; a player that chooses not to schedule a bid can set $\waketime = \infty$. 

\paragraph{Game execution}
A game between $\player_0$ and $\player_1$ involves an execution $\exec(\strategy_0, \strategy_1, [\blockinterval, \lossmult()])$ of their respective strategies, and accounts for players' imperfect information due to their latencies. We specify the procedure for $\exec$ in Figure~\ref{fig:exec} in Appendix \ref{app:model}.

\paragraph{Payoffs}

Players (bots) compete for financial rewards. We measure these rewards using the standard game-theoretic notion of a {\em payoff}. An execution $\exec(\strategy_0,\latency_0, \strategy_1,\latency_1 [\blockinterval, \lossmult()])$ outputs a pair $(\gain_0, \gain_1)$. These are the respective payoffs (profits or losses) of $\player_0$ and $\player_1$.

We define the payoff of $\strategy_0$ {\em against} $\strategy_1$ as follows:

\begin{align*}
& \payoffexec{\exec}{(\strategy_0,\latency_0) (\strategy_1,\latency_1)}[(\blockinterval, \lossmult())] = \nonumber \\ 
& \hspace{5mm} {\mathbb{E}}[\gain_0 \,\mid\, (\gain_0, \gain_1) \sample \exec(\strategy_0,\latency_0, \strategy_1, \latency_1,[\blockinterval, \lossmult()]).
\end{align*}. 

We refer to a strategy $\strategy_0$ as {\em null-profitable} if $\payoffexec{\exec}{(\strategy_0,\latency_0) (\strategy_{\varnothing},\varnothing}[(\blockinterval, \lossmult())] > 0$ when $\strategy_{\varnothing}$ is the null strategy (i.e., a strategy that never bids and thus its latency is also irrelevant).

\subsection{Why repeated bidding?}\label{sec:whyrepeated}

To understand the significance of our model, it is helpful to see why players' optimal strategies involve placing multiple bids over time. To show why, it is helpful by contrast to observe two settings in which players' optimal strategies involve placing a single bid.

\paragraph{Sealed-bid auction}
Suppose that instead of the complex game just outlined, a PGA were a standard sealed-bid auction (either first or second price, with $\lossmult() = 0$)~\cite{mechanism-design-lecture}. Players bid once and ties are broken randomly.

Players would then be incentivized to cast a single bid as close as possible to 1, throwing away almost the entirety of their potential profit. This behavior is intuitive.

If bids are discrete, with tick sizes (smallest increments) of $\epsilon$, then there is a Nash equilibrium for $S_0 = S_1$, the strategy of bidding $1-\epsilon$. In this case, $\payoffexec{\exec}{\strategy_0, \strategy_1}[(\blockinterval, 0)] = \epsilon /2$. Notice that since only one bid is cast and its sealed, we omit the latency parameter as it is irrelevant.
This is the only Nash equilibrium for deterministic strategies, since a deviation that bids an extra $\epsilon$ would otherwise be profitable. For randomized strategies, non-cooperative players will converge towards this equilibrium after playing multiple games in which they study one another's behavior. 

\paragraph{Fixed block duration}

Returning to our model, when $\latency_i, \latency_j > 0$, placing multiple bids is a useful strategy {\em only when $\blockinterval$ is a random variable}. The reason is that given a block interval of fixed duration $d$, players can place the equivalent of sealed bids in the interval $[d-\interval, d]$ after the mining of the last block. Thus, in this setting the {\em{only}} reason for repeated bidding is on-chain signaling of intent, which is more expensive than out-of-band communication.

\paragraph{Proof-of-stake blockchains} Notably, sealed-bid auctions are likely to be the norm on proof-of-stake blockchains, due to two reasons. First, in many proof-of-stake protocols the identity of near future miners is known in advance (perhaps anonymously, i.e., only the individual miner knows her time slot). This implies that miners can accept bids over secure (encrypted and authenticated) out-of-band channels. Second, the block duration is quite predictable in proof-of-stake chains -- miners forfeit their time slot unless they broadcast their block before a limit (measured according to the local clocks of other miners) is reached. There is some measure of unpredictability: the limit should be generous in order to accommodate propagation delays in the network, and miners may wish to collect many transactions that pay lucrative fees before broadcasting their block. Still, the block duration should behave according to statistical patterns that differ greatly from the stochastic process of PoW blockchains.
Our following analysis may thus apply to proof-of-stake systems with a high enough measure of unpredictability, though sealed-bid auctions are significantly more likely in such systems.

\subsection{What's a good strategy?}

Subject to their individual latencies, all players have identical strategy spaces, and thus we can directly compare strategies while only considering the latencies of the players that executed them. We define

\begin{align*}
    \advantage{\exec}{(\strategy_0, \latency_0),
    (\strategy_1,\latency_1)}[(\blockinterval, \lossmult())] = &\payoffexec{\exec}{(\strategy_0,\latency_0, (\strategy_{1},\latency_1})[(\blockinterval, \lossmult())] - \\ &\payoffexec{\exec}{(\strategy_1,\latency_1), (\strategy_{0},\latency_0)}[(\blockinterval, \lossmult())].
\end{align*}

\section{Priority-Gas-Auction Strategies}

We now report on classes of strategies that we have observed empirically and examine them within the context of our model. Given the large strategy space, we restrict our focus to the most prevalent PGA strategies that we have observed.
 
We thus consider three classes of strategies. The first two strategies are {\em blind raising} and {\em counterbidding}, which are often used in competition with one another. We then consider a {\em cooperative} strategy in which players take turns bidding. While we don't claim that we are seeing perfect cooperation (since if we were there would be no need for on-chain bidding at all), we show the cooperative strategy approximates and predicts a large class of behaviors that we have observed on-chain. Finally, we show that under conditions consistent with the PGAs we are observing on Ethereum, there exist Nash equilibria for cooperative strategies.

\subsection{Latency Wars}
For all strategies that we consider, our model predicts that players with a lower latency will have an advantage. Validating this feature of our model, we have observed empirically that players are consistently and substantially reducing their respective latencies over time (Figure \ref{fig:latency}).

 \begin{figure}
 \scalebox{0.9}{\includegraphics{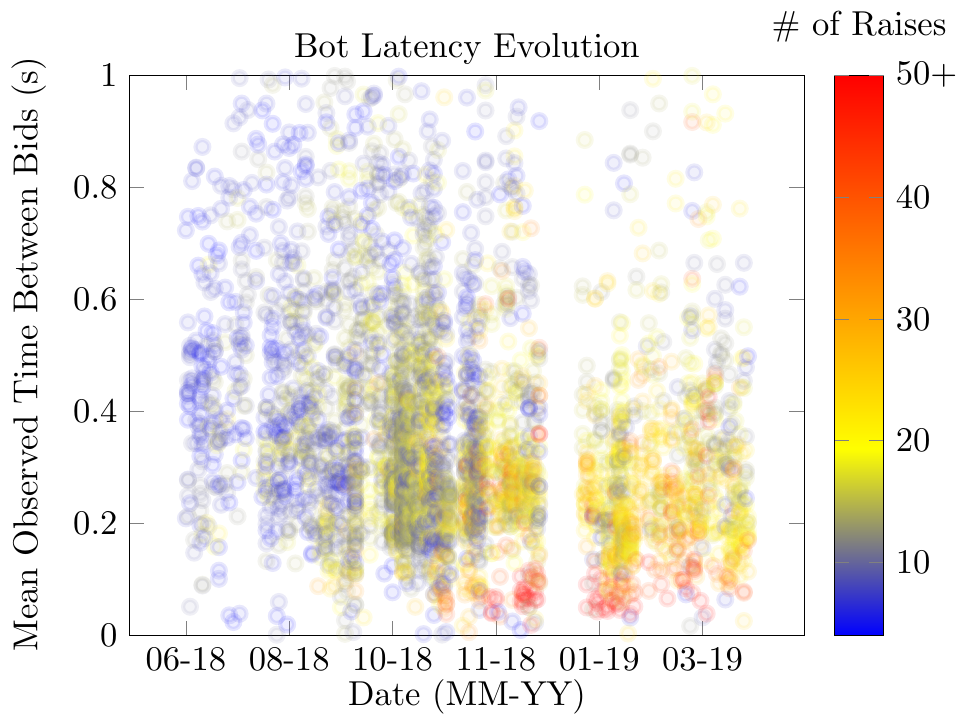}}
\vspace{-6mm}
\caption{Evolution of the PGA bots' latency strategies over time.  Each point is the mean latency of a bot in some observed auction on the plotted date.  The coloring of dots shows the number of total raises each bot placed in that respective auction.  Early in the observed market, the majority of auctions contained 0-10 raises per bot.  Over time, these auctions increasingly contained more raises and a lower mean latency.  We exclude all observed mean latencies that do not fall between 0 and 1 seconds; there are such latencies, which are the result of the limitations of our monitoring architecture mentioned in~\ref{sec:limitations}.  The number of opportunities per bot, however, illustrates a trend that is far less vulnerable to instrumentation limitations, as it does not depend on the order or precise timing of observed transactions.}\label{fig:latency}
\vspace{-4mm}
\end{figure}

\subsection{Blind raising}

{\em Blind raising} is a simple {\em non-adaptive} strategy. Player $\player_i$ raises her own bids under a predetermined schedule, and the strategy is invariant to $\bidtuples^*_{1-i}$, the history of the other player's bids. In PGAs that we have observed, this schedule often involves repeated increases by fixed fractional increments (e.g., 12.5\%, 21\% or 70\%). Basic blind raising is thus also {\em deterministic} (``pure''). A basic version may be modeled as follows:

\vspace{1mm}
\begin{itemize}
    \item []{\em Blind raising ($\strategy_0$):}  $\player_0$ emits a bid with amount $\bid_0$ at time $\tau = 0$ and $\bid_0 \times (1 + f)^k$ at time $k \interval.$
\end{itemize}
\vspace{1mm}

At first glance, this strategy, and 
non-adaptive strategies in general, seem like a bad idea: They fail to exploit information available to the player in $\bidtuples^*$.

Surprisingly, however, the imperfect information created by network latency means that a {\em non-adaptive strategy can achieve an advantage over natural adaptive strategies}. The intuition is this: By playing non-adaptively, a player can publish a bid {\em faster} than by waiting to see the opposing player's bid and reacting to it. We explain below in particular how blind raising can achieve an advantage over a natural, adaptive strategy called {\em reactive} counterbidding.

\subsection{Counterbidding}

{\em Counterbidding} is a strategy in which a player observes an opponent's bidding strategy and reacts by placing higher opposing bids. A simple example of this strategy is what we call {\em reactive counterbidding}, an adaptive strategy that a player $\player_1$ might execute against another player $\player_0$.  The strategy is described formally in Figure \ref{fig:reactive}.

\begin{figure}

\begin{myframe}{{\em Reactive 
counterbidding ($\strategy_1$)  :}}
   \begin{itemize}
    \item At time $\tau = 0$, $\player_1$ bids $s$.
    \item Denote the most recent bid by $\player_1$ as $\bid_1$. $\player_1$ waits until it sees a bid $\bid_0 > \bid_1$ cast by $\player_0$. $\player_1$ immediately counterbids $\min(\max(\bid_1 \times (1+\iota), \bid_1 + \epsilon), 1+\lossmult(\bid_1))$. 
\end{itemize}
\end{myframe}
    \centering
    \caption{Reactive counterbidding PGA strategy. $\epsilon$ is the minimum bid tick, $\iota$ is the minimum bid increment, and $s$ is the minimum starting bid. By bidding $\max(\bid_1 \times (1+\iota)$,  $\player_1$ ensures that its bid meets the minimum bid requirement and is also higher than $\player_0$'s bid. However, it would never make sense for $\player_1$'s bid to exceed $1+\lossmult(\bid_1))$ as at this point it is more profitable to not bid and instead pay $\lossmult(\bid_1)$.}
    \label{fig:reactive}
\end{figure}

Reactive counterbidding can achieve an advantage over a blind raising strategy---{\em provided, as we shall see, that $\latency_1$ is small with respect to $\interval$}. Counterbidding involves outbidding $\player_0$ as quickly as possible by a small amount, provided that the bid is raised by the required minimum.

As $\player_0$ has interval $\geq \interval$ between bids, reactive counterbidding achieves an interval $\geq \interval$ between the bids of $\player_1$.

Many consensus algorithms (e.g., Nakamoto consensus) exhibit $\blockinterval$ exponentially distributed with mean $\lambda$. Under the simplifying assumption that $\lossmult(\bid) = \bidcost$, we can show the following two observations, which we prove in Appendix \ref{app:model}.

\begin{observation}
\label{obs:obs1}
Let $\latency_i = 0$ and $\payoffa = \advantage{\exec}{(\strategy_0,\latency_0), (\strategy_{\varnothing},\varnothing)}[(\blockinterval, \bidcost)]$ be the payoff of null-profitable blind raising strategy $\strategy_0$ against $\strategy_{\varnothing}$. Then for

\begin{equation*}
    \frac{\bidcost}{\payoffa+\bidcost} < e^{-\lambda \interval},
\end{equation*}

\noindent as $\epsilon \rightarrow 0$, reactive counterbidding strategy $\strategy_1$ has positive payoff.
\end{observation}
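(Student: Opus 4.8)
The plan is to compute $\player_1$'s expected payoff against the fixed blind raiser $\strategy_0$ directly from the block-time law, conditioning on the single event that governs the structure of play: whether the block is mined during the opening rate-limited window $[0,\interval)$ or survives it. Throughout I would work in the limit $\epsilon \to 0$, which collapses the minimum tick so that whenever $\player_1$ holds the lead its winning bid coincides (up to vanishing terms) with $\player_0$'s current bid level, and I would use the simplified loss function $\lossmult(\cdot)=\bidcost$. I write $e^{-\lambda\interval}=\Pr[\blockinterval>\interval]$ for the survival probability of the exponential block interval.

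First I would isolate the cost of reacting. At $\tau=0$ the counterbidder posts the minimum starting bid $s$ and is then rate-limited, so it cannot re-bid before time $\interval$; meanwhile $\strategy_0$ already sits at its opening bid $\bid_0 > s$, so $\player_0$ strictly leads on all of $[0,\interval)$. If the block is mined in that window, an event of probability $1-e^{-\lambda\interval}$, then $\player_0$ wins and $\player_1$ pays $\lossmult(s)=\bidcost$, contributing $-\bigl(1-e^{-\lambda\interval}\bigr)\bidcost$ to the payoff. Since $\latency_1=0$ (more generally, small relative to $\interval$), this is the \emph{only} window in which the reaction lag forces $\player_1$ to trail.

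Next I would handle the complementary event by memorylessness. Conditioned on $\blockinterval>\interval$, the residual interval $\blockinterval-\interval$ is again exponential with the same rate, and from time $\interval$ onward $\player_1$ overtakes $\player_0$ at each raise and retains the lead, tracking $\player_0$'s ladder at matching bid levels and abandoning the auction only at the symmetric profitability cap $1+\bidcost$. Hence on this branch $\player_1$ occupies exactly the leading, blind-raiser-style role whose unopposed value was identified as $\payoffa$, so the conditional continuation payoff is $\payoffa$ and the branch contributes $e^{-\lambda\interval}\,\payoffa$. Summing the two branches yields
\[
\payoffexec{\exec}{(\strategy_1,\latency_1)(\strategy_0,\latency_0)}[(\blockinterval,\bidcost)]
= e^{-\lambda\interval}\,\payoffa - \bigl(1-e^{-\lambda\interval}\bigr)\,\bidcost,
\]
which is positive exactly when $e^{-\lambda\interval}\bigl(\payoffa+\bidcost\bigr)>\bidcost$, i.e.\ when $\tfrac{\bidcost}{\payoffa+\bidcost}<e^{-\lambda\interval}$, the claimed inequality.

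The delicate step, and the one I would spend the most care on, is the continuation claim: that conditioned on surviving the opening window, $\player_1$'s expected payoff equals the blind raiser's unopposed value $\payoffa$. The memorylessness reduction is clean, but one must argue that in the $\epsilon\to 0$, small-$\latency_1$ regime the reactive counterbidder genuinely inherits the blind raiser's winning position, retaking the lead at essentially the same bid level $\player_0$ holds, raising in lock-step thereafter, and dropping out at the same bid $1+\bidcost$ at which further bidding ceases to be profitable. Pinning down the exact bid at the instant of overtaking (the one-interval alignment between the two bid ladders) and the behavior at the drop-out boundary is where the argument is most fragile; once that equivalence is granted, the remaining work is routine summation of the geometric weights $e^{-\lambda k\interval}$.
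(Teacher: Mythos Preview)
Your proposal is correct and follows essentially the same approach as the paper: both decompose the expected payoff of $\strategy_1$ into the event that the block is mined before $\player_1$ can counterbid (probability $1-e^{-\lambda\interval}$, cost $\bidcost$) and the complementary event (probability $e^{-\lambda\interval}$, continuation value $\payoffa$), yielding exactly $e^{-\lambda\interval}\payoffa - (1-e^{-\lambda\interval})\bidcost > 0$ as the positivity condition. Your treatment is considerably more careful than the paper's two-line sketch---in particular you explicitly invoke memorylessness to justify the continuation value and flag the one genuinely delicate point (that $\player_1$ inherits the blind raiser's unopposed payoff $\payoffa$ after overtaking)---but the decomposition, the key probability computation, and the resulting inequality are identical.
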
 

We now show, somewhat counterintuitively,  that with the right parameters---specifically, when $\latency_1$ is high w.r.t. $\delta$ ---blind raising has an advantage over reactive counterbidding. 

\begin{observation}[Latency Amplification]
\label{obs:amplification}
For $\latency_1 > \interval$, there exists  a null-profitable, blind raising strategy, $\strategy_0$, such that for any pure reactive counterbidding strategy $\strategy_1$,
$\strategy_0$ achieves $\advantage{\exec}{(\strategy_0, \latency_0), (\strategy_1,\latency_1)}[(\blockinterval, \lossmult())] > 0$.
\end{observation}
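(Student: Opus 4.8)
The plan is to reduce the advantage to a single head-to-head comparison and then exhibit a blind-raising schedule that the reactive counterbidder can provably never overtake. First I would unfold the definition of $\advantage{\exec}{}{}$: since one execution $\exec(\strategy_0,\latency_0,\strategy_1,\latency_1,[\blockinterval,\lossmult()])$ already returns both payoffs $(\gain_0,\gain_1)$ and the reversed ordering merely relabels the players, the advantage equals $\mathbb{E}[\gain_0]-\mathbb{E}[\gain_1]$ in a single game. The crucial structural remark is that blind raising is non-adaptive, so $\player_0$'s bid trajectory, and hence the distribution of its winning bid \emph{conditioned on $\player_0$ winning}, is identical whether it faces $\strategy_1$ or the null strategy $\strategy_{\varnothing}$. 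Consequently, if I can force $\player_0$ to lead at the miner at \emph{every} instant, then $\player_0$ wins with probability $1$, so $\payoffexec{\exec}{(\strategy_0,\latency_0)(\strategy_1,\latency_1)}[(\blockinterval,\lossmult())]=\payoffa$, its null-profitable payoff, while $\player_1$ is always the loser and collects $\mathbb{E}[\gain_1]=-\mathbb{E}[\lossmult(\cdot)]\le 0$. This yields advantage at least $\payoffa>0$, so the whole statement reduces to building a null-profitable $\strategy_0$ that is never overtaken.

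Next I would give the construction and the key invariant. Write $B_0(t)$ and $B_1(t)$ for the bids of $\player_0$ and $\player_1$ visible to the miner at time $t$, so $B_0(t)=\bid_0(1+f)^{\lfloor t/\interval\rfloor}$. I pick a starting bid $\bid_0\in(s,1)$ and growth rate $f\ge\iota$ (fixing the numeric value in the null-profitability step). The invariant is $B_1(t)<B_0(t)$ for all $t\ge 0$, proved by induction over the discrete times $t_1<t_2<\cdots$ at which $\player_1$ reacts. The engine is latency-induced staleness: at a reaction $t_j$, $\player_1$ is responding to the \emph{delayed} bid $B_0(t_j-\latency_1)=\bid_0(1+f)^{m_j}$ with $m_j=\lfloor (t_j-\latency_1)/\interval\rfloor$, whereas the bid actually live at the miner is $B_0(t_j)=\bid_0(1+f)^{\lfloor t_j/\interval\rfloor}$. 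Because $\latency_1>\interval$ forces $\lfloor\latency_1/\interval\rfloor\ge 1$, superadditivity of the floor gives $\lfloor t_j/\interval\rfloor\ge m_j+1$. Since $\player_1$ raises its own bid by at most the factor $(1+\iota)\le(1+f)$, I can bound $B_1(t_j)\le(1+\iota)B_1(t_{j-1})<(1+\iota)\,\bid_0(1+f)^{m_j}\le \bid_0(1+f)^{m_j+1}\le B_0(t_j)$, where the strict middle step uses the inductive fact that $\player_1$'s pre-reaction bid had not yet been overtaken at the $m_j$-th level. Between reactions $B_1$ is flat while $B_0$ is nondecreasing, so the invariant propagates across the whole timeline; the base case is the interval $[0,\latency_1)$, on which $\player_1$ is pinned at $s<\bid_0\le B_0(t)$ because it has not yet observed any bid of $\player_0$.

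Finally I would discharge null-profitability and the corner cases. Null-profitability of $\strategy_0$ against $\strategy_{\varnothing}$ is exactly $\mathbb{E}[\bid_0(1+f)^{\lfloor \blockinterval/\interval\rfloor}]<1$; summing the resulting geometric series against the exponential $\blockinterval$ shows this holds precisely when $(1+f)e^{-\lambda\interval}<1$ and $\bid_0$ lies below $\bigl(1-(1+f)e^{-\lambda\interval}\bigr)/(1-e^{-\lambda\interval})$, so I choose $f\in[\iota,\,e^{\lambda\interval}-1)$ and $\bid_0\in(s,1)$ jointly in this range, which is nonempty in the fast-block regime where PGAs are observed. The main obstacle is the bookkeeping in the invariant, specifically the additive tick $\epsilon$ and the cap $1+\lossmult(\bid_1)$ in $\player_1$'s update: the geometric bound must be re-derived when $\player_1$ raises by $\bid_1+\epsilon$ rather than $\bid_1(1+\iota)$, which forces a smallness condition $\epsilon<\bid_0 f$ (hence the ``as $\epsilon\to 0$'' qualification), while the cap only ever lowers $B_1$ and so can only help $\player_0$. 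Confirming that these perturbations never let $B_1$ cross $B_0$, and that the reaction-time recursion is well defined, is the delicate part; the latency/floor inequality that drives everything is by comparison robust.
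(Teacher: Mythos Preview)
Your argument is correct and considerably more worked out than the paper's, but it takes a genuinely different route. The paper's three-sentence proof leans entirely on the word \emph{pure}: since $\strategy_1$ is deterministic, $\player_0$ knows in advance exactly what $\player_1$'s response to each bid $\bid_i$ will be, and since $\latency_1>\interval$, $\player_0$ can space its bids at intervals $\le\latency_1$ so that $\bid_{i+1}$ is already on the wire before $\player_1$'s response to $\bid_i$ arrives; $\player_0$ simply sets $\bid_{i+1}$ to beat that precomputed response. There is no invariant, no floor arithmetic, and no explicit verification of null-profitability.

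Your approach instead fixes a concrete geometric schedule $\bid_0(1+f)^{\lfloor t/\interval\rfloor}$ with $f\ge\iota$ and proves the staleness invariant $B_1(t)<B_0(t)$ directly, using the specific structure of reactive counterbidding (barely outbid the observed opponent bid) rather than abstract determinism. This buys you two things the paper's proof does not supply: (i) a single $\strategy_0$ chosen independently of $\strategy_1$, so the $\exists\strategy_0\,\forall\strategy_1$ quantifier order in the statement is actually honored (the paper's ``choose $\bid_{i+1}$ to beat $\strategy_1$'s response'' silently tailors the schedule to $\strategy_1$), and (ii) an explicit null-profitability check via the geometric series, which the paper omits. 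Conversely, the paper's determinism argument in principle covers reactive strategies that overbid the observed value arbitrarily, whereas your invariant as stated relies on $\player_1$'s raise being at most a $(1+\iota)$ factor or an additive $\epsilon$; you correctly flag this in your corner-case discussion, but the inductive chain $B_1(t_j)\le(1+\iota)B_1(t_{j-1})$ is not the right case split when the counterbid is $\bid_0^{\text{obs}}+\epsilon$. The fix is immediate---bound $B_1(t_j)\le\bid_0(1+f)^{m_j}+\epsilon<\bid_0(1+f)^{m_j+1}\le B_0(t_j)$ directly---so this is cosmetic rather than a gap.
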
 

\subsection{Cooperation}

We now turn our focus to cooperative strategies, which we believe to be a  close approximation to a common behavioral pattern we observe in which players slowly alternate raising bids. We demonstrate a cooperative Nash Equilibrium for the PGA game that helps shed light on the most common bidding behaviors that we observed. 

Notice that as bids are non-decreasing, with each successive bid that players submit, the 
maximum profitability of the opportunity decreases. Intuitively then, it is beneficial for all players to decrease the total number of bids in the game.

It is natural then to assume that, whenever possible, players will coordinate out-of-band to split the profits rather than decrease their profits by competitively bidding on-chain. However, by the very virtue of us seeing on-chain PGAs, we know that perfect cooperation does not exist. 

For example, if we consider a repeated game in which the same set of players is bidding on multiple (say for simplicity, equal value) opportunities, they can coordinate off-chain so that each opportunity only has a single bid on-chain and the profit for bidders is maximized. Indeed, in our measurement, we found many arbitrage opportunities for which there was only a single player, and it is entirely possible that off-chain coordination is at play here, although by its very nature there is insufficient data on chain to confirm this.

While off-chain cooperation may make sense, particularly in a repeated game, it has some drawbacks including lack of anonymity. We restrict our analysis here to the more interesting case of single game instances where players are bidding on-chain and coordinate their strategies to maximize expected profits Chief among them is that off-chain coordination is cumbersome and requires participants to identify themselves and sacrifice anonymity. This allows us to gain insight into the on-chain competitive auctions that we have observed. 

 As we will see, the cooperative strategies that we study help explain the behaviors in many of the PGAs that we have observed. Moreover, we show that under suitable parameters, there is a Nash Equilibrium for both players to follow a cooperative strategy.

We stress that for our cooperative equilibrium to emerge, it's not necessary that players explicitly coordinate out-of-band, but this behavior can develop organically on-chain. Indeed, we have seen evidence that over time, players have moved closer toward a cooperative equilibrium, which is consistent with well known results in experimental game theory that participants in the wild will converge to an equilibrium over time \cite{nagel1995unraveling,cabrales2007equilibrium}.

\paragraph{Grim trigger equilibrium}

We now describe a particular cooperative strategy in which any defection is responded to by immediately bidding the maximum amount, thereby eliminating all profitability from the auction. 

This strategy class is similar to the {\em grim trigger} strategy that appears in the game theory literature in the context of repeated games \cite{prisoners-dilemma-strategies}. Notice that since PGAs involve successive opportunities to bid potentially in response to the bids of other players, even a single iteration of our auction has elements of a repeated game.

Consider two players who coordinate with the following strategies: at set intervals, players alternate their bids, each time bidding the minimum increase. If any player deviates either by bidding too soon or by raising the bid higher than they are supposed to, the other player will respond by raising their bid such that the game is no longer profitable. This strategy is characterized by following parameters: $V$ is the set of agreed upon times at which bids will be scheduled, and $W$ is the set of agreed upon bid values, such that that $W[i]$ will be bid by the appropriate player at time $V[i]$.

The grim-trigger cooperative strategy is formally described in Figure \ref{fig:cooperate}.

\begin{figure}

\begin{myframe}{{\em Cooperative ($\strategy_i$) with parameters $D,W$:}}
    \begin{itemize}
    \item 
    At time $\tau = V[2k + i]$, $\player_{i}$ emits a bid with amount $W[2k + i]$.
    \item If $\player_i$ observes a bid of value \bid at time $t$ such that $t < V[2k+i]$ but $\bid > W[2k+i -1]$, then $\player_i$ immediately bids $\$1+\lossmult(\bid)$.
\end{itemize}
\end{myframe}
    \centering
    \caption{Grim-trigger cooperative PGA strategy}
    \label{fig:cooperate}
\end{figure}

Notice that in the second condition, we assume that $\player_i$ can determine the time $t$ that the bid was emitted by $\player_{1-i}$. This is consistent with our model in which the latency of each player is known. Thus, even though $\player_i$ will not observe $\player_{1-i}$'s bid until time $t + \latency_i$, it can compute $t$ by subtracting $\latency_i$ from the time that it first observed the bid. 

As we will see, the knowledge gap due to latency is an important feature when determining if our model has an equilibrium. All defections will eventually be detected, but depending on the latencies, there may still be sufficient time to profit from deviating.

\paragraph{Minimum bid raises}

In a cooperative PGA, the main function of bidding on-chain is to alternate bids between players, but all players are incentivized to keep the actual bid price as low as possible to maximize profit. The following observation follows:

\begin{observation}
\label{obs:optimalbids}
In a grim-trigger cooperative PGA, for all $i \leq i_{max}$, the optimal choice for bids is $W[0] = s$ and $W[i] =W[i-1]\times (1+\iota)$.
\end{observation}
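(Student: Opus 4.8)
The plan is to reduce the profit-maximization claim to a pointwise minimization of the bid sequence $W$, and then settle that minimization by a short induction. The reduction rests on showing that the cooperative objective --- the joint expected payoff of the two players --- is \emph{monotonically non-increasing} in every bid value $W[i]$. Recall that the auction terminates when the next block is mined, at a random time $T$ whose law is governed by $\blockinterval$ (exponential). For any fixed realization of $T$, exactly one bid $W[m]$ is standing (the last one placed before $T$, which is also the highest since bids are non-decreasing): its author wins and receives the unit payoff minus a gas cost that is increasing in $W[m]$, while the other player pays $\lossmult(\cdot)$ of its own last standing bid, also increasing in that bid. Hence replacing $W$ by any pointwise-smaller feasible sequence $W'$ weakly increases both players' realized payoffs for \emph{every} value of $T$, and strictly so on a set of $T$ of positive probability. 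Taking expectations over $T$, the joint payoff is non-increasing in each $W[i]$, so the optimal cooperative choice is the pointwise-minimal feasible bid sequence.

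First I would pin down the feasibility constraints that lower-bound the bids. Two are active. The \emph{minimum starting bid} forces $W[0] \geq s$, since a smaller opening bid risks exclusion from the next block. The \emph{minimum bid increment} forces each scheduled raise to satisfy $W[i] \geq (1+\iota)\,W[i-1]$. The bare requirement that bids be increasing (so that the lead alternates, as the grim-trigger schedule prescribes), namely $W[i] > W[i-1]$, is dominated by the increment constraint because $1+\iota > 1$; I would note in passing that as the tick size $\epsilon \to 0$ it is the increment constraint, not the strict-outbid constraint, that binds, which is precisely what pins the geometric ratio at $(1+\iota)$ rather than at $1$.

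Next I would carry out the minimization by induction on $i$. For the base case, the minimal feasible opening bid is $W[0] = s$. For the inductive step, given that the minimal feasible value of $W[i-1]$ is $(1+\iota)^{i-1}s$, the constraint $W[i] \geq (1+\iota)W[i-1]$ makes the minimal feasible value of $W[i]$ equal to $(1+\iota)W[i-1] = (1+\iota)^i s$. A symmetric induction then yields optimality directly: any feasible profile $W'$ obeys $W'[0]\geq s$ and $W'[i]\geq (1+\iota)W'[i-1]$, so $W'[i] \geq (1+\iota)^i s$ for all $i \leq i_{max}$; thus the sequence $W[0]=s$, $W[i]=(1+\iota)W[i-1]$ dominates every competitor pointwise and, by the reduction above, maximizes the joint payoff.

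The main obstacle is the monotonicity reduction of the first paragraph, not the arithmetic of the last. Making it rigorous requires treating the random stopping time $T$ together with the all-pay ($\lossmult$) structure at once: one must verify that the winner-determination rule (which bid is standing at time $T$) and the loser's payment are each coordinatewise monotone in $W$, so that lowering a bid never flips an outcome in a way that harms the coalition. Care is also needed because lowering $W[i]$ changes who is ahead during the interval following the $i$-th scheduled bid; I would argue that, since all bids remain ordered and the instantaneous payoff is decreasing in the standing bid, no such reordering can reduce the realized joint payoff. Finally, I would confirm that this bid-minimizing profile is consistent with --- indeed demanded by --- the cooperative equilibrium analyzed next, so that neither player has incentive to inflate bids above the geometric schedule.
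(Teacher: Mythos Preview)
The paper does not actually prove this observation: it is stated as following immediately from the one-sentence remark that ``all players are incentivized to keep the actual bid price as low as possible to maximize profit,'' and the subsequent analysis simply adopts the geometric schedule. Your proposal is therefore considerably more detailed than anything in the paper --- you supply both the monotonicity reduction (joint payoff is non-increasing in each $W[i]$) and the inductive minimization that the paper leaves entirely implicit. In that sense your argument is the one the paper should have given.

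One point worth tightening: you assert that the minimum-increment rule forces $W[i] \geq (1+\iota)\,W[i-1]$, but per Section~\ref{sec:modelproperties} the increment $\iota$ is measured against a player's \emph{own} previous bid, so the protocol constraint is really $W[i] \geq (1+\iota)\,W[i-2]$; consecutive bids $W[i-1],W[i]$ come from different players and are only required to satisfy $W[i] > W[i-1]$ (so the lead alternates). The paper's statement of the observation glosses over the same distinction --- it simply asserts the ratio $1+\iota$ between consecutive bids without comment --- so your reading is faithful to the paper's informal treatment. A fully rigorous version would either invoke symmetry between the two players' schedules or note that with $W[0]=W[1]=s$ infeasible (strict outbidding) and the own-bid constraint binding thereafter, the interleaved minimum is effectively the stated geometric sequence up to a negligible tick $\epsilon$.
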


Remarkably as shown in Figure \ref{fig:raisestrategyscatter}, in our observed PGAs, players converge over time to a minimum bid raise of $12.5\%$, which is the minimum allowed raise in the Parity client. It is known that players in the wild will converge to Nash Equilibria \cite{cabrales2007equilibrium,nagel1995unraveling} and this on-chain behavior is consistent to a convergence toward elements of cooperative equilibrium.

We stress that we don't claim that perfect cooperation is occurring or even that our exact equilibrium is on play. We do claim, however, that players are converging to a more cooperative state, that is approximated by our model, in which they allow opportunities for other players in an effort to maximize their own expected profit over time.

 \begin{figure}
 \scalebox{0.9}{ \includegraphics{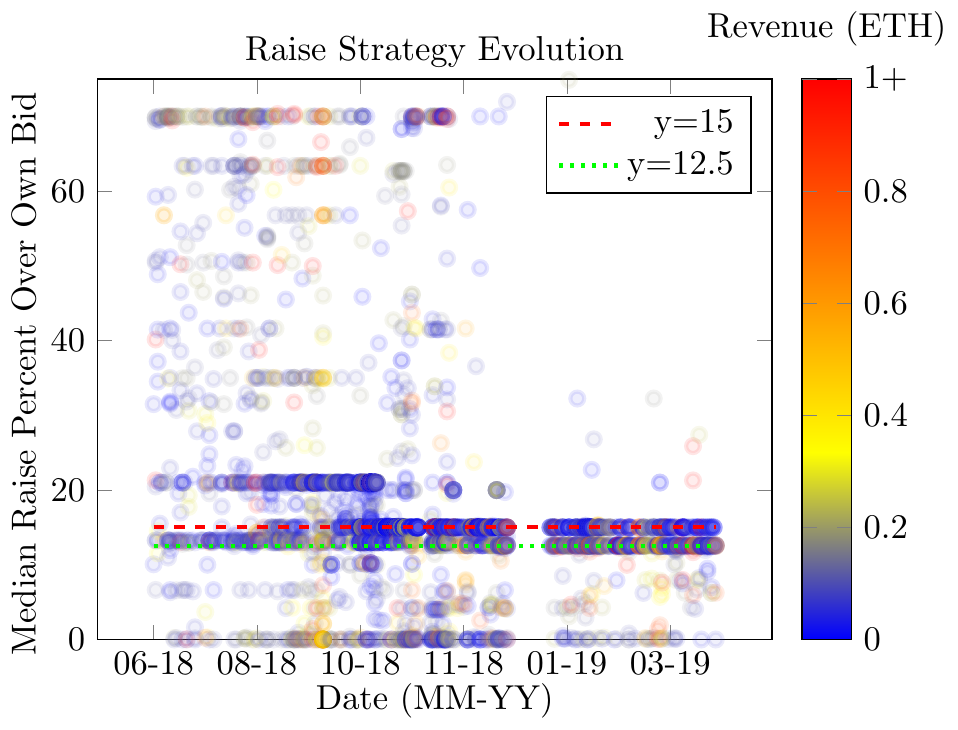}}
\vspace{-6mm}
\caption{Evolution of the PGA bots' median raise strategies over time for bots with more than 4 bids in pure revenue PGAs.  Note the strategy convergence in the iterated game to almost exclusively the 12.5 (minimum raise) and $15\%$ (deviation from minimum raise) levels.  Opportunities below the $y=12.5$ line are likely artifacts of bots emitting multiple transactions from globally distributed nodes; these opportunities correspondingly show a bias towards high revenue compared to those using the predicted strategies.  We trim outliers and omit points with median raises over 75\%; we observe 5 such high-raise outliers of 4,007 total observations.}\label{fig:raisestrategyscatter}
\vspace{-4mm}
\end{figure}

\paragraph{Nash equilibrium} We now present our main result showing that there exist Nash equlibria for a wide range of parameters using the grim-trigger cooperative strategy.

\begin{theorem}
\label{obs:main}
For parameters consistent with Ethereum PGAs, there exist a grim-trigger Nash equilibria for a 2 player PGA where both players follow a $D,W$ cooperative strategy.
\label{obs:ethparams}
\end{theorem}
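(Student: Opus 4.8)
The plan is to exhibit the cooperative profile explicitly and then verify the two incentive constraints. I would fix both players to the grim-trigger strategy of Figure~\ref{fig:cooperate}, instantiated with the optimal schedule of Observation~\ref{obs:optimalbids}, so that $W[0]=s$ and $W[k]=W[k-1](1+\iota)$ and the scheduled bid times $V$ are spaced by the rate limit $\interval$. Because these values respect the minimum starting bid, the minimum increment, and the rate limit by construction, the profile is feasible; it remains to show it is a Nash equilibrium. First I would compute the on-path payoff $\gain_i^{\mathrm{coop}}$ of each player $\player_i$: since the auction ends at the exponentially distributed time $\blockinterval$ and the high bidder alternates at the times in $V$, the winner at termination is simply whoever most recently bid, and integrating the winner's net profit $\payoff - W[k]$ against the exponential block-arrival density (using its memorylessness to collapse the sum over rounds) gives a closed form for $\gain_i^{\mathrm{coop}}$, which I would confirm is strictly positive so that participation is individually rational.

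The core of the argument is bounding the gain from deviating against a grim-trigger opponent. The two structural facts I would lean on are that the punishment bid $\payoff + \lossmult(\bid)$ drives every player's continuation profit to a nonpositive value, and that it is triggered by \emph{any} observable deviation. Together these imply a one-shot-deviation reduction: since no post-deviation continuation can be profitable, it suffices to bound the payoff of a single deviation at an arbitrary decision point, rather than to optimize over multi-stage deviation plans. Next I would use that latencies are known and retaliation is immediate upon observation, so a deviation emitted at time $t$ by $\player_i$ is detected by $\player_{1-i}$ at exactly $t+\latency_{1-i}$; the only interval in which the deviator can extract value is therefore the undetectable window of length $\latency_{1-i}$.

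Within that window the most a deviator can achieve is to raise high enough to win outright, earning $\payoff$ minus its bid if the block is mined in $[t,t+\latency_{1-i}]$ and otherwise incurring the grim-trigger loss. By the memorylessness of $\blockinterval$ the probability of the block arriving in that window is $p = 1-e^{-\lambda\latency_{1-i}}$, independent of $t$ and consistent with Observation~\ref{obs:obs1}. I would thus bound the best deviation payoff above by $p(\payoff - s) - (1-p)\lossmult(\bid)$ and require $\gain_i^{\mathrm{coop}}$ to dominate it for each $i$. This yields an explicit inequality relating $p$, $\lossmult$, $s$, $\iota$, and $\lambda$; I would then check that it holds in the regime characterizing Ethereum PGAs---latencies $\latency_i$ small relative to the mean block interval so that $p \ll 1$, a partial-all-pay loss $\lossmult(\bid) < \bid$, and a starting bid and increment that keep the scheduled bids comfortably below $\payoff$ over a typical auction---establishing the equilibrium for both players simultaneously.

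The step I expect to be the main obstacle is discharging the full continuous-time deviation space rather than only the single grab-and-hope deviation: a player might probe with a small premature raise, split its raises across several instants inside the latency window, or deviate in timing instead of amount. The crux is to argue that each such alternative is dominated by the single within-window grab, which should follow precisely because the grim trigger annihilates all continuation profit the instant any deviation is observed, so nothing is gained by spreading a deviation over multiple detectable steps. Making this domination rigorous, and handling the measure-zero timing coincidences inherent to continuous-time play, is where the genuine effort will be concentrated.
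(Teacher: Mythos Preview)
Your high-level structure---set up the grim-trigger profile with the minimum-increment schedule, invoke a one-shot deviation reduction, and compare continuation cooperation value to deviation value---matches the paper's approach via Lemma~\ref{lem:nash} and Observations~\ref{obs:cooperate}--\ref{obs:deviate}. But there is a real gap in how you execute the comparison, and the paper's route differs from yours in a way that matters.

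The one-shot deviation principle you correctly invoke requires the incentive constraint to hold at \emph{every} decision point $j$, i.e., $E^{(\player_{\text{non-bidder}})}_{\text{cooperate}}(j) \geq E^{(\player_{\text{non-bidder}})}_{\text{deviate}}(j)$ for all $j \leq i_{\max}$. You instead compute a single on-path payoff $\gain_i^{\text{coop}}$ from time zero and a single uniform deviation bound $p(\payoff - s) - (1-p)\lossmult(\bid)$, and compare those once per player. This does not suffice: although memorylessness of $\blockinterval$ makes the deviation window probability $p$ time-invariant, the scheduled bids $W[j]$ grow geometrically, so the continuation cooperation payoff conditional on reaching interval $j$ shrinks with $j$ while your deviation upper bound stays fixed. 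At late intervals your inequality can fail even when the true (interval-dependent) one holds, because the actual deviation payoff also shrinks---the deviator must bid at least $W[j+1]$, not $s$, so the paper's bound $p_{\text{time}}(\textsf{delay})\cdot\max(-c,\,1-W[j+1]) + (1-p_{\text{time}}(\textsf{delay}))\cdot(-c)$ is the right comparator, not your uniform one. The paper also warns (just before the statement) that if deviation is profitable at \emph{any} interval, backward induction unravels the equilibrium, so the check really must be made interval-by-interval.

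The paper's proof accordingly does not attempt a closed-form inequality. It instantiates the Ethereum parameters ($1/\lambda = 15$s, interval length $t=0.4$s, latency $\approx 0.1$s, $\iota=12.5\%$, hence $i_{\max}\approx 17$), computes $E_{\text{cooperate}}(j)$ and $E_{\text{deviate}}(j)$ at each of the finitely many intervals using the appendix formulas, and verifies numerically (Figure~\ref{fig:model4}) that the non-bidder's cooperation curve dominates the deviation curve everywhere. Your analytical route could be made to work, but you would need to track the $j$-dependence on both sides and show the gap stays nonnegative as $j$ runs to $i_{\max}$; your current single-inequality reduction does not do this. Separately, your choice of $V$ spaced by the rate limit $\interval$ is one admissible schedule, but the paper's verification uses the empirically observed interval $t=0.4$s, which need not equal $\interval$.
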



\begin{figure}
 \scalebox{0.5}{\includegraphics{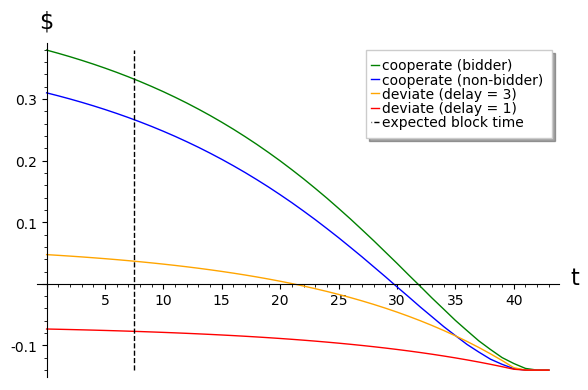}}
 \caption{Cooperative PGA with interval $t=2$}
 \label{fig:model1}
\end{figure}

\begin{figure}
 \scalebox{0.5}{\includegraphics{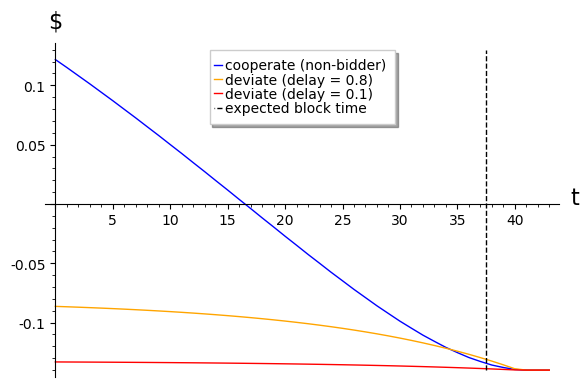}}
 \caption{Cooperative PGA with interval $t=0.4$}
 \label{fig:model4}
\end{figure}

Figure~\ref{fig:model1} and Figure~\ref{fig:model4}  show the expected payoff of the bidder and non-bidder in a cooperative PGA. The x-axis indicates the starting time of the interval, where the expected duration of the game is $1/\lambda = 15$ seconds. The y-axis indicates the expected payoff.
Note that the players are alternating: if Player~1 is the bidder and Player~2 is the non-bidder when the first interval starts, then Player~1 is the non-bidder and Player~2 is the bidder when the second interval starts. 
These figures also display the profitability of deviation, i.e., the expected payoff of the non-bidder (according to latency parameters) in case she deviates when the interval begins.

If the deviation payoff surpasses the non-bidder payoff at any point (as with the
$\texttt{delay=3}$ plot in Figure~\ref{fig:model1}), then a Nash equalibrium cannot hold for those parameters -- a backward induction argument implies that if it is profitable for the non-bidder (say, Player~1) of interval $i$ to deviate, then it must also be profitable for the non-bidder (Player~2) of interval $i-1$ to deviate, and so on.

Figure~\ref{fig:model4} corresponds to the Ethereum parameters per our empirical measurements: $t=0.4$ seconds is the mean interval length and $\delta=0.1$ seconds is the estimated latency in our measurements.

\smallskip
To prove our main claim, let us recall that each player's bid increments her previous bid by $12.5\%$. Given our measurements, this implies $17$ rounds before the players give the entire auction's reward to the miners. We can thus verify that in each of the few dozens of intervals prior to the end of the game, the expected payoff of adhering to the cooperative strategy is larger than the expected payoff upon deviation.
Indeed, the Ethereum parameters that are plotted in Figure~\ref{fig:model4} show that the expected non-bidder payoff is always greater than the expected payoff of deviation. Theorem~\ref{obs:ethparams} thus follows. 
\section{Miner-Extractable Value and Blockchain Security}
\label{sec:securityissues}

PGAs and DEX arbitrage may not seem immediately harmful or relevant to the security of an underlying blockchain. They might simply seem an efficient mechanism for conveying market information among network participants.  Unfortunately, we now argue, DEXes in fact present a serious security risk to the blockchain systems on which they operate, i.e., {\em at the consensus layer}. In other words, a key result of our work is that {\em application-layer security poses a current and direct threat to consensus-layer security}.

In a stable blockchain, block rewards incentivize honest miner behavior. As we show empirically in this section, though,
\emph{order optimization} (OO) fees, or implicit fees a miner is able to reap by leveraging their control of a consensus epoch, can exceed the block reward and instead incentivize forking attacks.  To capture OO fees, a miner can reorder users' transactions and potentially insert their own, reaping profit in Ether directly to their account.  For example, a miner could execute the pure revenue transactions described in this work themselves, while still claiming the PGA fees for all the ``losing" arbitrage bots attempting to do the same. 

OO fees represent one case of a more general quantifiable value we call {\em miner-extractable value} (MEV).  MEV refers to the total amount of Ether miners can extract from manipulation of transactions within a given timeframe, which may include multiple blocks' worth of transactions.  In systems with high MEV, the profit available from optimizing for MEV extraction can subsidize forking attacks of two different forms. The first is a previously shown {\em undercutting attack}~\cite{carlsten2016instability} that forks a block with significant MEV. The second is a novel attack, called a {\em time-bandit attack}, that forks the blockchain retroactively based on past MEV.

Undercutting attacks were previously considered a risk primarily in the distant future, when block rewards in Bitcoin are expected diminish below transaction fees. By measuring the significance and magnitude of OO fees, our work shows that undercutting attacks are a {\em present threat}.

Time-bandit attacks are also a present and even larger threat. They can leverage not just OO fees, but {\em any} forms of miner-extractable value obtained by {\em rewinding} a blockchain.
Time-bandit attacks' existence implies that DEXes and many other contracts are inherent threats to the stability of PoW blockchains, and the larger they grow, the bigger the threat. 

\subsection{OO fees: Measurement study}

We now lower bound the severity of OO fees in DEXes.

Figure~\ref{fig:oodistribution} shows the distribution of pure revenue OO fees as a fraction of total miner-extractable value in Ethereum.  We conservatively estimate MEV in this context as the sum of explicit transaction fees and pure revenue OO fees, making our distribution an underestimate of the share of MEV that does not come from explicit fee payments in Ethereum. Note that PGAs themselves are an indirect mechanism for miners to claim OO fees: if miners instead claimed OO fees directly, there would be no incentive to bid in or create a PGA.

Of all blocks since block 3,875,490 (the first block we observed with a pure revenue OO fee from decentralized exchange arbitrage), approximately 3.6\% contain at least one pure revenue arbitrage transaction.  Of these blocks, 17,897 blocks contain pure revenue OO profits that are $<1\%$ of the total stealable fees, confirming our earlier conclusions that the most frequent pure revenue arbitrage opportunities are for small price differences in the course of normal trading across exchanges.  Nonetheless, many blocks do contain substantial pure revenue fees.  36,860 observed blocks contain pure revenue fees that exceed 20\% of total stealable fees, 12,002 blocks derive the majority (over 50\%) of stealable fees from pure revenue, and 2,517 blocks derive more than 80\% of stealable fees from pure revenue.

\begin{figure}
\scalebox{0.9}{\includegraphics{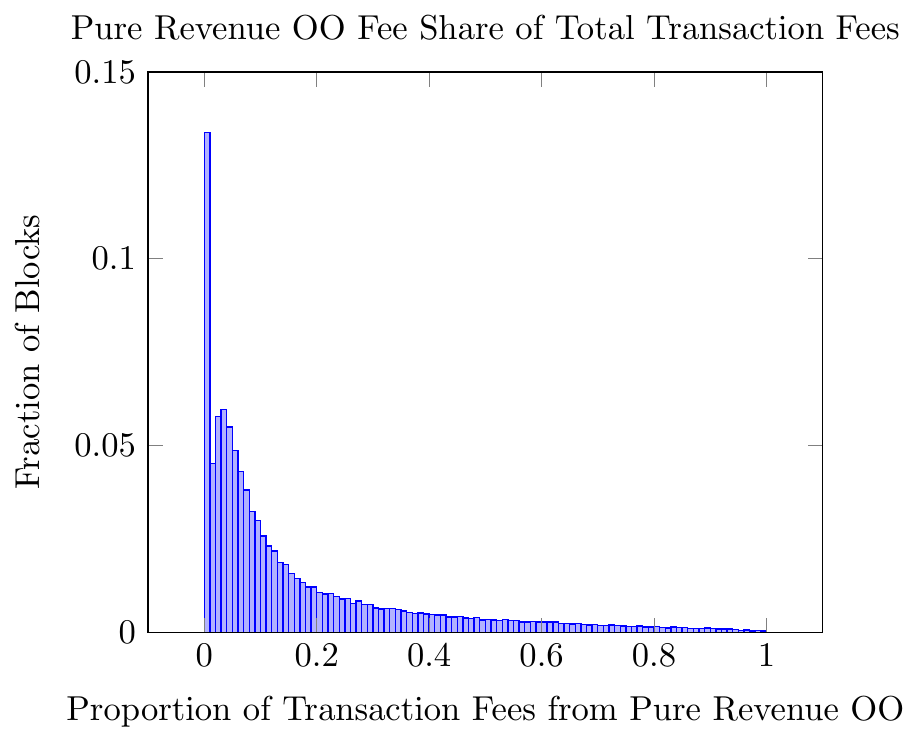}}
\vspace{-2mm}
\caption{Distribution of pure revenue OO as percentage of block transaction fees in blocks containing such transactions on the Ethereum network, blocks 3875490 to 7408826; 133,815 of 3,533,336 contain observed pure revenue (3.6\%).  Since block 7,000,000 (Jan 2 2019), this has increased to 6.3\%.}\label{fig:oodistribution}
\end{figure}

\begin{figure}
\includegraphics{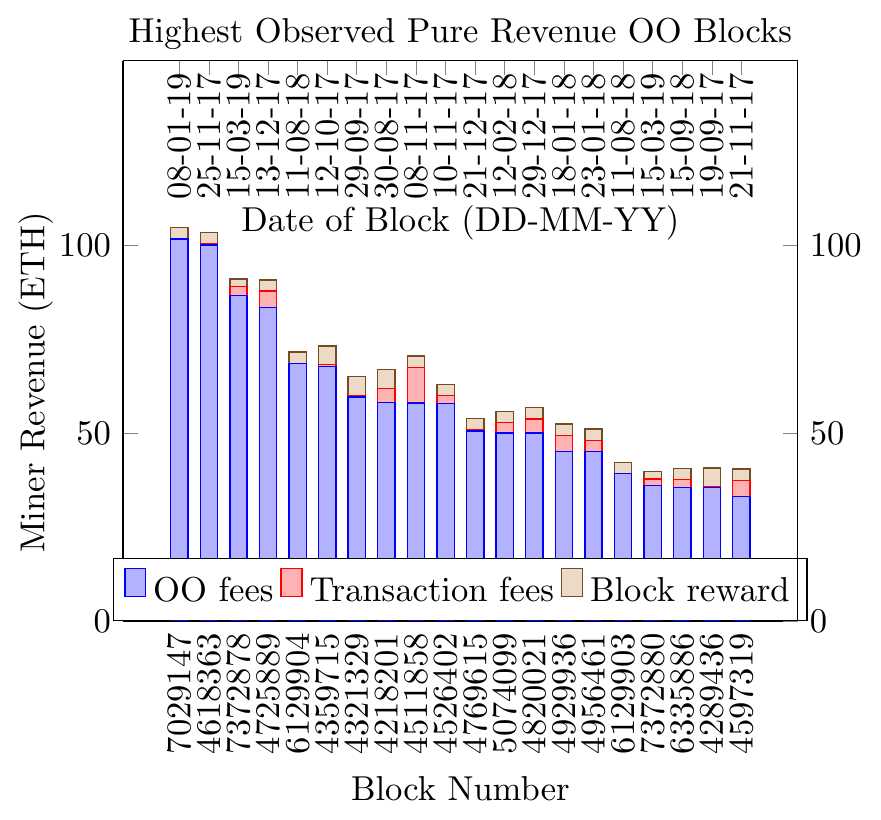}
\vspace{-5mm}
\caption{Blocks with the highest pure revenue OO fees observed on Ethereum.  As is shown here, OO fees in these blocks dominate both block rewards and transaction fees, often by more than an order of magnitude.}\label{fig:hotoo}
\vspace{-4mm}
\end{figure}

Occasionally, these fees can be substantial and provide substantial miner incentives to orphan blocks or otherwise deviate from the mining protocol. Figure~\ref{fig:hotoo} shows the 20 blocks observed on Ethereum with the highest absolute pure revenue OO fees.  

As an example, the highest block, block 7029147 (\url{https://etherscan.io/block/7029147}), contained an arbitrage trade generating a revenue of 101.6 ETH, dwarfing the block reward of 3 ETH and the insignificant explicit transaction fees of 0.022 ETH. In this particular block, one transaction generated all the pure revenue OO fees.  This transaction, whose profit graph is available at \small \url{https://bit.ly/2D4U57c}\normalsize, saw an exchange of Bigbom token (BBO) for ETH on Bancor and Kyber; an error in Bancor's pricing formula allowed a user to buy BBO at rates far under market. The bot's order size was 101.6 ETH, of which almost all was taken as pure profit. 

\emph{As with all our analyses, we stress that our measurements are conservative: They represent lower bounds on arbitrage behavior.}  The limitations in our instrumentation harness discussed in Section~\ref{sec:limitations}, lead to potential underestimation.

\subsection{Undercutting attacks}

Undercutting attacks~\cite{carlsten2016instability} represent one vector of attack that can leverage OO fees.

It is well known that fixed per-block miner rewards are a key feature of secure and stable cryptocurrency protocols, as originally described in~\cite{carlsten2016instability}.  In that work, the authors analyze the incentives of blockchain miners in a regime where transaction fees exceed the inflationary subsidy paid to miners by a blockchain protocol. They observe that when the block reward is dominated by fees, rewards have high variance. As a result, a miner can fork a high-fee block, holding back some fees to attract other miners to build on the fork.  In extreme cases, incentives to deviate from the protocol may lead to disruption in miner strategies for economically rational miners, reducing the security provided by block confirmations.

Prior to our work, transaction fees have been viewed as the only source of value for undercutting attacks, with~\cite{carlsten2016instability} even calling a world in which fees dominate block rewards a ``transaction-fee regime.'' That regime seemed a distant prospect. The title of~\cite{carlsten2016instability}, ``On the Instability of Bitcoin Without the Block Reward,'' refers to Bitcoin block rewards going to zero, an event anticipated around the year 2140.

Our study shows that OO fees are a form of value that sometimes dominates explicit transaction fees {\em today}. The tail of the distribution in Figure~\ref{fig:oodistribution} (transaction fee proportions $> 0.5$) and all of the example blocks in Figure~\ref{fig:hotoo} represent such opportunities. 
In other words, undercutting attacks represent a {\em present threat} in Ethereum, and one that will grow with the success of smart contracts that attract OO fees.  Other potential sources of OO fees are mentioned in~\cite{clark-prediction}.

Most importantly, our view of arbitrage remains conservative, and we emphasize that pure revenue opportunities represent a small slice of total ordering fees. There are \emph{many} possible sources of ordering fees payable to miners, including more sophisticated arbitrage. There are also many other forms of miner-extractable value. Miners can ``steal'' arbitrage opportunities from arbitragers by taking them themselves. Additionally, the survey in~\cite{eskandari2019sok} describes several other sources of MEV, including the ability to buy into profitable ICOs early and the manipulation of games of chance.

Undercutting attacks can leverage OO fees or any other MEV from newly generated blocks. The second attack we describe can use MEV from {\em past blocks} to subsidize an attack.

\subsection{Time-bandit attacks}

We now describe {\em time-bandit attacks}, a new, second attack vector that can exploit miner-extractable value. They can exploit MEV from new blocks, but more powerfully, can also use MEV from {\em past blocks} via rewinding.

Time-bandit attacks are conceptually simple. Suppose a blockchain has a suffix (subchain) $[{\sf height}_0,{\sf height}_1]$, with current block height ${\sf height}_1$, in which stealable value exceeds block rewards. An adversary can rewind to ${\sf height_0}$ and use the resulting MEV to subsidize a profitable 51\% attack that mines a fork up to or past ${\sf height}_1$.

Of course, a time-bandit attack relies on real-time access to massive mining resources. As noted in~\cite{bonneau2018hostile}, however, ``rental attacks'' are feasible using cloud resources, particularly for systems such as Ethereum that rely heavily on GPUs, which are standard cloud commodities. Sites such as http://crypto51.app/ estimate the costs.  We illustrate with an example that relies on MEV from DEX rewinding.

\begin{example}
Consider a price spike from 1 USD to 3 USD in a token that trades on an on-chain automated market maker (e.g., Bancor~\cite{Bancor:2019}). A miner performing a time-bandit attack can now rewrite history such that it is on the buy side of every trade, accruing a substantial balance in such a token at below market rate---{\em before the price spike}.\footnote{The attacker can additionally use new user trades at the market 3 USD price in the automated market maker system to offload their tokens into ETH.}

For example, if the attacker wishes to rewrite 24 hours of history, and 1M USD of volume occurred in exchanges with rewritable history in that token, then the attacker can obtain a MEV gross profit of 1 M $\times$ (3 USD - 1 USD) = 2M USD.\footnote{More sophisticated reordering strategies stand to profit the miner even more. The miner can selectively include orders that manipulate the prices received by other traders \emph{in the past}.}

At the time of writing (March 2019), http://crypto51.app/ estimates a 24-hour 51\% rental-attack cost on Ethereum of about 1.78M USD, implying a net profit of around 220K USD.
\end{example}

We stress that time bandit attacks are \emph{not limited to MEV from DEXes}.  A variety of smart contract systems allow anyone to participate and earn profit for doing so, often a desirable design goal for the style of permissionless and open interaction that lends itself naturally to blockchains.  Any on-chain action in the past that could have potentially profited a miner today, including actions that unconditionally earn them ETH in the past, are thus potential sources for time-bandit attacks.  Because smart contracts are Turing complete scripts and carry complex interactions, estimating the size of these opportunities is a challenging problem we defer to future work.

\vspace{2mm}
\noindent \emph{Time-bandit attacks in Ethereum:} Recent transaction statistics suggest that Ethereum is vulnerable to time-bandit attacks. For example, decentralized exchange volumes show a peak of 1.5 billion USD of traded assets on Ethereum's decentralized exchanges in July 2018~\cite{dexvolume}.  While it is hard to gauge the total stealable value in this volume, the current estimated one-month cost of a 51\% attack on Ethereum according to \url{http://crypto51.app/} is approximately 56 million USD, more than {\em 25 times lower than this DEX volume}.  

We posit that the OO fees alone that we have described threaten the security of today's Ethereum network. As Figures~\ref{fig:oodistribution} and~\ref{fig:hotoo} show, blocks with high OO fees and/or arbitrage opportunities can already enable such attacks.

More generally and alarmingly, time-bandit attacks can be subsidized by a malicious miner's ability to rewrite profitable trades retroactively, stealing profits from arbitrageurs and users while still claiming gas fees on failed transactions that attempt execution.  The resulting MEV is potentially massive, suggesting a possibly serious threat \emph{in Ethereum today}.

Of course, a full analysis of the threat would require an understanding of how time-bandit attackers might compete against one another to harvest MEV---by analogy with PGAs. This is a topic for future research. 

\section{Open Questions and Future Work}
\label{sec:futurework}

Our results raise many important questions, some about the arbitrage community itself. For example, it would benefit arbitrageurs to collude with miners, but we observe no such collusion: Preliminary experiments show that bot transactions are equally distributed across mining pools. Are there incentives to avoid collusion, such as concern about the exogenous impact of miner malfeasance coming to light? 

Other questions arise from PGA modeling. Are PGAs positive- or negative-sum games? In what ways could our proposed model be helpfully enriched? 

More broadly, it is important to observe that DEXes are just the tip of the iceberg. At the time of writing, IDEX, the largest, is ranked \#119 by volume by \url{coinmarketcap.com}. It has about 1M USD in 24h volume, compared with 970M USD for Binance, the leading centralized exchange. DEX volume is roughly 0.01\% that on centralized exchanges.

Malfeasance in centralized exchanges might well be rampant---possibly even more egregiously so than in DEXes~\cite{Bitfinex:2018}. As activity in such exchanges takes place off-chain, it is private, and cannot easily be ascertained without privileged access. Additionally, while centralized exchange malfeasance might seem not to impact on-chain security, the two are inextricably linked. By altering token trading dynamics on-chain, an adversary could manipulate and profit from centralized exchange dynamics. For example, a time-bandit attack could accumulate cheap tokens for offloading in a centralized exchange. 

These observations raise several key follow-up questions:

\begin{itemize}
    \item What forms do arbitrage take in centralized exchanges and at what volumes? And frontrunning?
    \item Barring direct data access, what techniques can be used to accurately measure various forms of trading activity in centralized exchanges?
    \item What financial incentives do centralized exchanges create for malfeasances in DEXes? How might they impact blockchain stability?
    \item What other insights might our data yield on DEX arbitrage, especially in quantifying opportunities that are not pure revenue?  How much larger is the full arbitrage economy than executed? Can tight bounds be developed for the amount of MEV on Ethereum today?
\end{itemize}

\section{Conclusion}

We have reported on a sizable economy of bots profiting from opportunities provided by transaction ordering in DEXes.  We quantify the breadth of a specific subset of arbitrage, \emph{pure revenue opportunities}, providing lower bounds on the profitability of ordering manipulation.

We have also formally modeled the behavior of bots competing against each other for miner-supplied transaction priority in \emph{priority gas auctions}. Our empirical study validates several key predictions of our model, including the convergence of bots on a form of profitable cooperation involving minimal gas-price increases. We also show that in many concrete cases, bots' revenue from pure revenue arbitrage alone far exceeds the Ethereum block reward and transaction fees.

Finally, we argue that \emph{miner extractable value}, particularly in the form of \emph{order optimization fees}, implicit fees from modifying transaction order, threatens blockchain consensus stability. Such fees are large enough to subsidize serious attacks on the network. They constitute an economic vulnerability that should be a current cause for concern in Ethereum.

\section*{Acknowledgments}

We would like to thank Matt Weinberg for productive discussions surrounding our model of priority gas auctions.

\vspace{2mm}

This material is based upon work supported by the National Science Foundation Graduate Research Fellowship under Grant No. DGE-1650441, as well as by NSF grants CNS-1514163, CNS-1564102, and CNS-1704615, as well as by ARO grant W911NF16-1-0145.

We also thank IC3 industry partners for their support in funding this work. 
\printbibliography

\appendices
\section{Smart-Contract-Enabled Complex Nondeterminism}
\label{sec:complexnondeterminism}

Although this work often focuses on pure revenue opportunities, not all opportunities are pure revenue.

While arbitrage bots' intent is often clear by inspection of their transactions, other times, intent may be opaque, nondeterministic, or may depend on unpredictable aspects of the Ethereum network state.  The smart contract wrappers through which bots can execute transactions enable more than simple batching of transactions.  Such contracts are actually atomically-executing programs written in Turing-complete scripting languages, and can therefore implement complex strategies.

For example, during, before, or after executing a set of trades, a wrapper contract can also query prices from and conditionally trade with automated market makers.  Because the prices offered by these automated market makers cannot be known in advance, and because batches can revert, even based on changes in the Ethereum global state and the success of other batches, orders placed through smart contracts can express complex  conditional preferences on trade execution. Complex trades can also perform arbitrary computation on the Ethereum network.  Such complex order types do not have analogs in traditional HFT, but are similar to (and more general than) proposals such as~\cite{sandholm2002winner}.

\section{Additional Data Analyses}

\subsection{Pure Revenue USD Market}
\label{sec:purerevenueusd}

\begin{figure}
\hspace{-4mm}
\scalebox{1.0}{\includegraphics{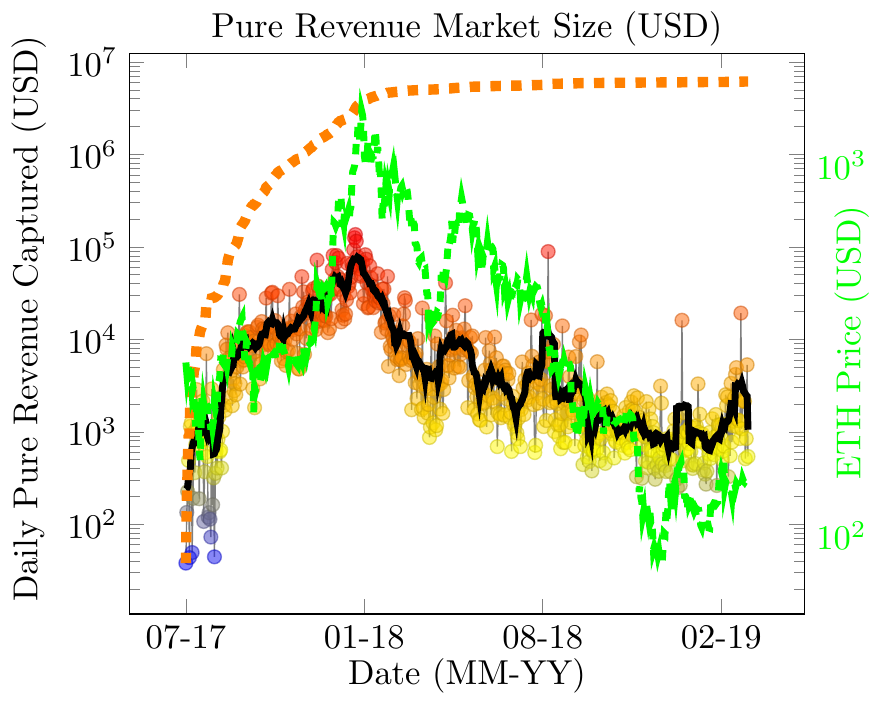}}
\vspace{-2mm}
\caption{Size of the pure revenue market in USD as shown in Figure~\ref{fig:purerevenueethscatter} for ETH, displayed with USD/ETH price overlay.}\label{fig:purerevenueusdscatter}
\end{figure}

Figure~\ref{fig:purerevenueusdscatter} shows the ETH price and size of the pure revenue market in USD.  When compared to the equivalent graph denominated in ETH (Figure~\ref{fig:purerevenueethscatter}), some effect of price reductions on the market size are evident.  Opportunities in ETH are relatively consistent and well distributed, as we expect from mechanical rents extracted due to exchange design fundamentals.  Such USD graphs appear more correlated with the price, showing decreased revenue during price slumps and contradicting our other data about the sophistication of the PGA bot market increasing over time.

Because of this and the relevance of ETH to direct protocol security analyses, we denominate the majority of our synthesized results in ETH.

\subsection{Pure Revenue, Broken Down by Bots}

\begin{figure*}
\scalebox{1.0}{\includegraphics{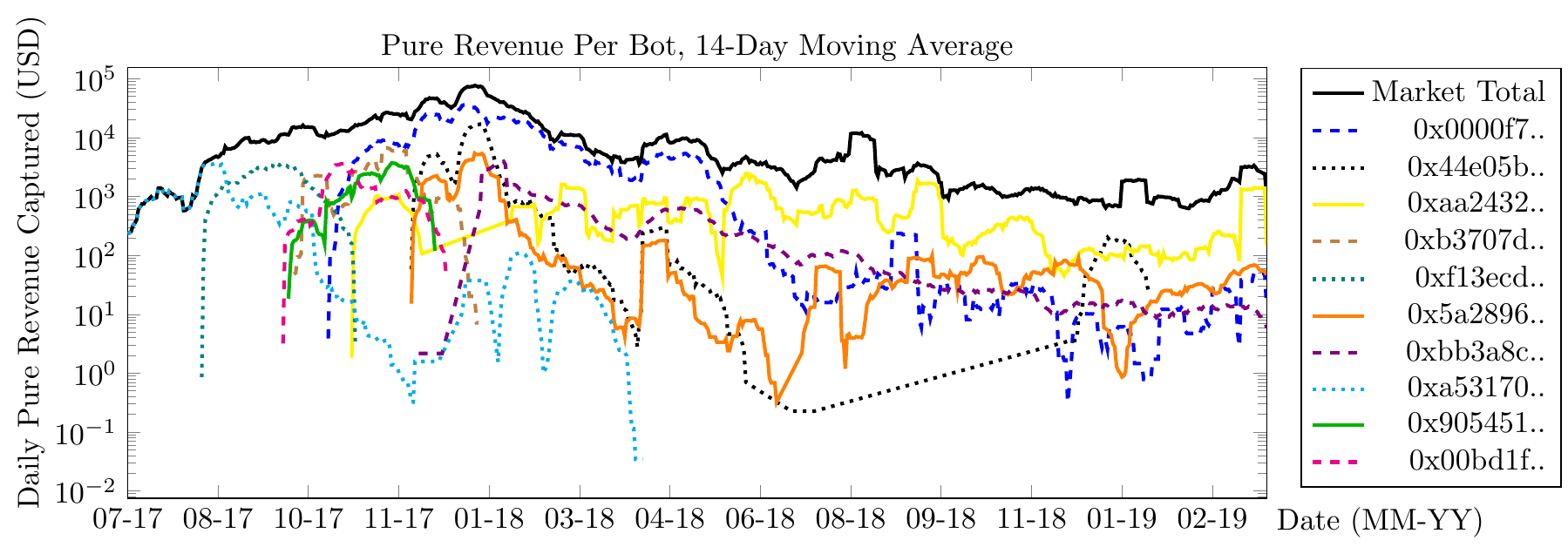}}
\vspace{-2mm}
\caption{Pure revenue bot breakdown, as described in Section~\ref{sec:prevalence}, showing revenue without subtracting transaction costs.}\label{fig:purerevenuebotrevenue}
\end{figure*}

Figure~\ref{fig:purerevenuebotrevenue} shows the size of the full pure revenue opportunity market captured by the top 10 transaction senders; note that the graph is very similar to the graph of the estimated profits we calculate, having a relatively constant offset and almost identical shape for all market players.

This may reflect imperfections in our heuristics for calculating profit; we cannot include, for example, server costs, and it is difficult to tell which on-chain transactions other than a pure revenue transaction may have contributed to a bot's specific costs.  We also do not include costs of failing to capture competitive pure revenue opportunities, which may be significant; it is difficult to disambiguate these failed transactions from unrelated transactions, as they may never execute or attempt to execute the intended trades.  These matters are left to future work and analyses of the data we publish.

\section{PGAs and their participants' games}
\label{sec:intuition}

To provide some informal intuition for the modeling we explore in~\ref{sec:model} we now describe the perspective of both bots and miners participating in PGAs.  This also provides the basis for our later claims that PGAs have the ability to lead to systemic instability in blockchains, degrading their security, a notion we explore fully in Section~\ref{sec:securityissues}.

\subsection{The miner perspective}

From the point of view of the miner who is eventually going to mine a block, the economic game proceeds as follows.  At some point in time (along the approximate objective time axis displayed in Figure~\ref{fig:exampleauction}), the miner takes a snapshot of all transactions it has seen on the peer to peer network, and includes the highest fee transactions in a block template that it attempts to solve the proof of work on, converting the block template into a block able to be propagated to the network.

When this snapshot is taken, the highest $n$ transactions generated by PGA bots  are included in the block.  $n$ is the number of unique (account, nonce) pairs observed in the epoch and therefore the number of unique eligible bidders: recall each (account, nonce) can only be mined once.  The highest paying transaction is included first, and captures the associated revenue of the opportunity.  The remaining transactions are included in descending order of gas price, or marginal cost per instruction.

If all ``bids" are competing for the same pure revenue opportunity, transactions after the first receive no revenue.  If such bids are complex probabilistic orders which are not inherently mutually exclusive, we may expect to see decreasing power-law profits, the lions share of which go to the first arbitrageur.

Note that in Figure~\ref{fig:exampleauction}, the mined block template was likely formed immediately after the winning transaction and before the next transaction with the same (account, nonce) pair; otherwise, the higher fee transaction would have been included by the miner to maximize revenue.  This means that the block template that would finally win this game was created by the miner around second 5. Note also that arbitrageurs continue to bid after this result is locked in, unaware that this pool has formed the template that will eventually constitute a valid block, showing their imperfect information with regards to this game.  After some delay, the bots learn of the outcome, and stop bidding in the auction.

This picture overview clearly that it is in the miner's best interest to refresh the block templates it is mining on as quickly as possible.  Let us assume the real block was mined after the end of the auction, around second 15.  This assumption is reasonable, as there is no incentive for the bots we observe to continue bidding after they are made aware a block is discovered, and there is also substantial incentive for miners to optimize propagation latency to the network (see e.g.~\cite{eyal2014majority}.  Had the miners instead included the transactions that bid highest in this action, their total capture profit would be $231520 \cdot 8856.24 \text{ Gwei}=x \text{ ETH}$ for bot 0xb8D76f4BC2518F8eb508bf0Ccca76f8F9DD57a3f, and $227534 \cdot 7716.48 \text{ Gwei}=y \text{ ETH}$ for bot 0x6BEcAb24Ed88Ec13D0A18f20e7dC5E4d5b146542, for a 
total of $z$ ETH gained by an optimal miner. Assuming the third bot would have bid similarly to these two bots and would have contributed equally to the miner's reward, MiningPoolHub's suboptimality in this instance cost $q$ ETH.

It is obvious to see why substantial arbitrage opportunities then directly alter miner strategies, by incentivizing them to optimize on \emph{template refresh latency} as described above.  If PGA opportunities increase, we expect to see such optimizations in practice.

\subsection{The PGA bot perspective}

From the perspective of the bots competing for block space in PGAs, the game looks very different.  Bots, like miners, see transactions on the public peer to peer network performed by their peers.  They are able to simulate the execution of these transactions, read the orderbooks of decentralized exchanges, and issue orders.  Bots should optimize on latency to other bots to gain information more quickly about their actions.  Bots should also optimize on latency to miners, to make it more likely that at the time when the miners form the template, their bid was the last bid received by the miner and therefore reflects the most market information.  We explore the full space of bot strategies in Section~\ref{sec:model}.  Our model shows that optimizing for latency is profitable for a bot participating in this market, and draws the connection to high frequency (latency-sensitive) trading in traditional markets.

\section{Additional PGA modeling details}
\label{app:model}

\begin{figure*}[ht]
\begin{center}
\fbox{
\procedure[linenumbering]{$\exec(\strategy_0, \latency_0, \strategy_1, \latency_1, [\blockinterval, \lossmult()])$}{
\currentbidtuples, \pend, \delayed{0},\delayed{1} \leftarrow \emptyset; \currenttime{}, \currenttime{0}, \currenttime{1} \leftarrow 0 \\
\timet_{{\sf end}} \sample \blockinterval\\
\hspace{-1mm}\pcdo\\
    \pcind \pcif \firstbidtime(\pend) \leq \min (\currenttime{0}, \currenttime{1},\firsttime(\delayed{0}),\firsttime(\delayed{1})) \pcthen\\
    \pcind \pcind \currenttime{} \leftarrow \firstbidtime(\pend)\\
        \pcind \pcind \bidtuple =  (\hat{t}, \bid; \pindex) \leftarrow \popfirst(\pend) \\
        \pcind \pcind \currentbidtuples \leftarrow \currentbidtuples \,\cup\, \bidtuple \\
        \pcind \pcind \delayed{1-\pindex} \leftarrow \delayed{1-\pindex} \cup (\hat{t} + \latency_{1-\pindex}) \\
\pcind \pcif \text{for } {\pindex} \in \{0,1\}, \firsttime(\delayed{i}) < \firstbidtime(\pend) \text{ and } \firsttime(\delayed{\pindex}) \leq \min (\currenttime{0}, \currenttime{1},\firsttime(\delayed{1-\pindex}))  \pcthen\\
\pcind \pcind \currenttime{} \leftarrow \poptime(\delayed{i})\\
        \pcind \pcind (\move, \playstate_i', \waketime{,i}) \sample \strategy_{\pindex}(\currentbidtuples[\currenttime{}-\latency_{i}], \playstate_{\pindex}, \currenttime{})\\
        \pcind \pcind \playstate_{\pindex} \leftarrow \playstate_{\pindex}'; \currenttime{\pindex} \leftarrow \waketime{,i}\\
        \pcind \pcind \pcif \move \neq \bot \pcthen\\
        \pcind \pcind \pcind \pend \leftarrow \pend \,\cup\, \move\\
\pcind \pcif \text{for } {\pindex} \in \{0,1\}, \currenttime{\pindex} < \min (\firstbidtime(\pend),\firsttime(\delayed{0}),\firsttime(\delayed{1}))  \text{ and } \currenttime{\pindex} \leq \currenttime{1 - \pindex}  \pcthen\\
        \pcind \pcind \currenttime{} \leftarrow \currenttime{\pindex} \\
        \pcind \pcind (\move, \playstate_i', \waketime{,i}) \sample \strategy_{\pindex}(\currentbidtuples[\currenttime{}-\latency_{i}], \playstate_{\pindex}, \currenttime{})\\
        \pcind \pcind \playstate_{\pindex} \leftarrow \playstate_{\pindex}'; \currenttime{\pindex} \leftarrow \waketime{,i}\\
        \pcind \pcind \pend \leftarrow \pend \,\cup\, \move\\
\pcuntil \firstbidtime(\pend) > \timet_{{\sf end}}\\
\bidtuples \leftarrow \currentbidtuples\\
\move_{{\sf win}} =  (\timet_{{\sf win}}, \bid_{{\sf win}}; \pindex) \leftarrow \maxbid(\bidtuples)\\
\move_{{\sf lose}} =  (\timet_{{\sf lose}}, \bid_{{\sf lose}}; 1-\pindex) \leftarrow \maxbid(\bidtuples_{1-\pindex})\\
\pcif \pindex = 0 \\
\pcind {\sf output}\, (\gain_0, \gain_1) = (\payoff - \bid_{{\sf win}}, \lossmult(\bid_{{\sf lose}}))\\
\pcelse\\
\pcind {\sf output}\, (\gain_0, \gain_1) = (\lossmult(\bid_{{\sf lose}}),\payoff - \bid_{{\sf win}})
}
   } 
\end{center}
\caption{{\bf Pseudocode for $\exec$, the execution of a PGA.} This procedure maintains a list of pending bids $\pend$ awaiting transmission. Bids in the pending list, $\pend$, are added to the current bid list, $\currentbidtuples$, at their scheduled time, but each player will not see them until after their individual latency, $\latency_i$. For each bid that is cast by player $\player_{\pindex}$, a time is added to $\delayed{1-\pindex}$ that indicates when $\player_{1-\pindex}$ will first see the bid. Players are awoken to schedule bids at their specified wake time, or earlier if they see a bid cast by the other player.
For clarity, we define several functions. $\firstbidtime(\pend)$ is the time $\timet$ associated with the earliest pending transaction.  $\popfirst(\pend)$ pops and outputs the earliest transaction; given two transactions with the same time, it picks one uniformly at random. $\maxbid(\bidtuple)$ outputs the maximum price bid in $\bidtuple$; given multiple bids at the same price, it outputs the first in temporal order. $\firsttime(\delayed{\pindex})$ outputs the earliest time from a list of times; $\popfirst(\delayed{\pindex})$ pops and outputs the earliest time. Although not specified here, bids must meet the validity rules -- i.e. must meet the minimum start bid $s$, minimum raise $\iota$ and must be staggered by a minimum time $\interval$. }
\label{fig:exec}
\end{figure*}

\begin{proof}[Proof of Observation \ref{obs:obs1}]
$\strategy_1$ counterbids after time $\delta$. Since $\blockinterval$ has distribution ${\sf Exp}(\lambda)$, the probability that any bid by $\strategy_0$ will have priority in a mined block is $\Pr[X < \interval]$, for $X$ a random variable with distribution ${\sf Exp}(\lambda)$. This holds with probability $1 - e^{-\lambda \interval}$. Thus, the payoff of $\strategy_1$ is positive if:

\begin{equation*}
    e^{-\lambda \interval} \payoffa - (1 - e^{-\lambda \interval})\bidcost > 0.
\end{equation*}
The observation follows.
\end{proof}

\begin{proof}[Proof of Observation \ref{obs:amplification}]
Since $\latency_i > \interval$, $\strategy_0$ can schedule its bids at intervals less than $\latency_i$. Thus, by the time $\player_1$ observes a bid $\bid_i$, $\player_0$ has already cast a new bid $\bid_{i+1}$. Since $\strategy_1$ is pure, $\player_1$ can choose a value $\bid_{i+1}$ that outbids $\strategy_0's$ reactive bid to $\bid_i$. Thus $\player_1$'s bids will always maintain the highest bid and win the PGA. 
 \end{proof}

\begin{observation}
\label{obs:max}
There exists a finite number of intervals $i_{max}$ for a cooperative PGA after which bidding is no longer profitable, even if the block has not yet been mined.
\end{observation}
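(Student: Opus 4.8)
The plan is to play the geometric growth of the cooperative bid schedule off against the fixed, normalized payoff $\payoff$. First I would unwind the structure of the grim-trigger cooperative strategy of Figure~\ref{fig:cooperate}: because the two players strictly alternate, the bid $W[i]$ scheduled at interval $i$ is a raise over the \emph{same} player's immediately preceding bid $W[i-2]$. The minimum-bid-increment rule therefore forces $W[i] \ge (1+\iota)\,W[i-2]$, and combined with the minimum starting bid $W[0] \ge s > 0$ (and $W[1]\ge s$) this yields the lower bound $W[i] \ge s\,(1+\iota)^{\lfloor i/2 \rfloor}$, which diverges to $+\infty$. Assuming Observation~\ref{obs:optimalbids} the schedule is in fact $W[i]=s(1+\iota)^i$ and grows faster, but I deliberately use only the weaker minimum-increment bound so that the argument does not depend on the choice of optimal bids.

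Second, I would identify the profitability threshold. Whoever eventually wins the auction with bid $W[i]$ collects payoff $\payoff - W[i]$, which is positive only while $W[i] < \payoff$. The decisive point is that this is a comparison of bid magnitudes alone: it makes no reference to the random mining time $\blockinterval$, so \emph{even conditioned on the block not yet having been mined}, once a scheduled bid exceeds $\payoff$ the best attainable outcome---winning---is already a net loss, and a rational player simply declines to place it. Taking $i_{max}$ to be the largest index with $W[i_{max}] < \payoff$ and combining with the divergence above gives the explicit finite bound $i_{max} = O\!\big(\log(1/s)/\log(1+\iota)\big)$, which establishes the observation.

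To make the cutoff tight rather than crude, I would then fold in the all-pay loss term. Abandoning at interval $i$ costs the player $\lossmult(W[i-2])$ on her last bid, so bidding is preferable only while $W[i] < \payoff + \lossmult(W[i-2])$. Invoking the partial all-pay property $\lossmult(\bid) \le \bid$, the chain $(1+\iota)\,W[i-2] \le W[i] < \payoff + W[i-2]$ collapses to $\iota\,W[i-2] < \payoff$, i.e.\ $W[i-2] < \payoff/\iota$. This is once more a fixed finite bound and shows the players quit (weakly) earlier than the coarse $W[i] < \payoff$ cutoff.

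The step I expect to be the real obstacle is pinning down the profitability threshold, not the growth estimate (which is routine). One must argue that the relevant cutoff is a \emph{fixed finite constant} in $i$; the subtlety is that the loss term $\lossmult(W[i-2])$ on the right-hand side itself grows with $i$, so it is not immediate that geometric bid growth overtakes it. The bound $\lossmult(\bid)\le\bid$ is exactly what rescues the argument, as the collapse above shows. The only remaining care is bookkeeping around ties (broken uniformly at random in $\exec$) and the minimum tick $\epsilon$, so that $i_{max}$ is unambiguously well-defined.
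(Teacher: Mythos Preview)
Your argument is correct and in fact more careful than the paper's own proof, which is essentially a one-liner: it takes the optimal schedule $W[i]=s(1+\iota)^i$ from Observation~\ref{obs:optimalbids} together with the constant-loss simplification $\lossmult(\bid)=\bidcost$ used throughout the appendix, and defines $i_{max}$ as the largest index with $s(1+\iota)^{i_{max}}\le 1+c$. You differ on both inputs. First, you use only the per-player increment constraint $W[i]\ge(1+\iota)W[i-2]$ that the model actually imposes, rather than the optimal schedule; second, you keep $\lossmult$ general and close the argument via the partial all-pay bound $\lossmult(\bid)\le\bid$, which is exactly what is needed to prevent the growing right-hand side $1+\lossmult(W[i-2])$ from outrunning the geometric left-hand side. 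This buys robustness---your conclusion holds for any feasible cooperative schedule and any partial all-pay loss, not just the specific instantiation the paper analyzes---at the cost of a slightly weaker explicit constant (exponent $\lfloor i/2\rfloor$ versus $i$). For the paper's purposes the terse version suffices, since both simplifying assumptions are already in force in the downstream equilibrium computation.
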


\begin{proof}

Let $i_{max}$ be the greatest value for which the following inequality holds:

$$s \times (1 + \iota)^{i_{max}} \leq 1 + c, $$

\noindent where where $s$ is the starting bid and $n$ is the minimum bid increment as defined in Section \ref{sec:model}.

We can equivalently define a latest end time for the PGA $t_{end}=V[i_{max}]$.
\end{proof}

\noindent where $V[i] < t' < V[i+1]$.

\begin{observation}
\label{obs:ptime}
Denote by $p_{time}(t)$ the probability that a PGA has ended by some time $t$. For proof-of-work blockchains, $p_{time}(t) = 1- e^{-\lambda t}$.
\end{observation}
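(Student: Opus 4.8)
The plan is to prove Observation~\ref{obs:ptime} directly from the modeling assumption, stated in Section~\ref{sec:modelproperties} under \emph{Probabilistic auction duration}, that the block interval $\blockinterval$ is an exponentially distributed random variable with rate $\lambda$ (equivalently, mean $1/\lambda$). The quantity $p_{time}(t)$ is defined as the probability that the PGA has ended by time $t$, and since the auction terminates precisely when the next block is mined, $p_{time}(t)$ is exactly the cumulative distribution function of $\blockinterval$ evaluated at $t$. So the entire content of the claim is the elementary fact that the CDF of an $\mathsf{Exp}(\lambda)$ random variable is $1 - e^{-\lambda t}$ for $t \geq 0$.

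First I would fix the clock so that $t=0$ corresponds to the moment the previous block was mined, i.e.\ the start of the current auction, which is consistent with how $\exec$ in Figure~\ref{fig:exec} initializes $\currenttime{} \leftarrow 0$ and draws $\timet_{\sf end} \sample \blockinterval$. Then $p_{time}(t) = \Pr[\blockinterval \leq t]$. The key step is to justify that for proof-of-work blockchains the block-discovery process is memoryless: block production is a Poisson process, because each hashing attempt succeeds independently with a tiny fixed probability and attempts occur at an essentially constant aggregate rate across the network, so the waiting time until the next block is exponential. I would cite this as the standard Nakamoto-consensus model already invoked in Section~\ref{sec:modelproperties}, rather than re-deriving the Poisson limit. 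Given $\blockinterval \sim \mathsf{Exp}(\lambda)$ with density $f(x) = \lambda e^{-\lambda x}$ for $x \geq 0$, integrating gives
\begin{equation*}
p_{time}(t) = \Pr[\blockinterval \leq t] = \int_0^t \lambda e^{-\lambda x}\, dx = 1 - e^{-\lambda t},
\end{equation*}
for $t \geq 0$, with $p_{time}(t) = 0$ for $t < 0$. This is the claimed formula.

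I would then note briefly why the memoryless property is exactly the right feature to invoke here rather than an arbitrary choice: because the PGA may have already been running for some elapsed time when we ask whether it has ended, the conditional probability of termination in the next instant must not depend on how long mining has already been underway, and the exponential is the unique continuous distribution with this property. This also dovetails with the use of $p_{time}$ elsewhere in the analysis, where payoffs are computed by conditioning on the block being mined within or after a given interval (as in the proof of Observation~\ref{obs:obs1}, which uses $1 - e^{-\lambda\interval}$ for precisely the probability that a bid survives an interval of length $\interval$). The main obstacle, such as it is, is not mathematical but one of framing: the honest statement is that this ``observation'' is a direct restatement of the Section~\ref{sec:modelproperties} modeling assumption, so the proof's real job is to make explicit the identification $p_{time}(t) = F_{\blockinterval}(t)$ and to point to the physical justification (memoryless PoW mining) for why the exponential model is adopted, rather than to perform any nontrivial derivation.
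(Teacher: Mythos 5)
Your proposal is correct and matches the paper's own proof, which likewise notes that the block interval in proof-of-work blockchains is exponentially distributed and then reads off $p_{time}(t) = 1 - e^{-\lambda t}$ as the CDF of $\mathsf{Exp}(\lambda)$. Your additional remarks on the Poisson/memoryless justification and the identification $p_{time}(t) = F_{\blockinterval}(t)$ are fuller than the paper's one-line argument but do not constitute a different route.
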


\begin{proof} 
Since the block interval in a proof-of-work blockchain is exponentially distributed, we model the PGA end time as an exponential distribution with rate parameter $\lambda$. The probability that the auction has ended by some positive time $t$ is given by the CDF of the exponential distribution:$ 1- e^{-\lambda t}.$
\end{proof}


\begin{observation}
Assume that miners continuously update their blocks with the most profitable set of transactions. For players $\player_0, \player_1$ adhering to the cooperative strategy with parameters $D,W$, the probability that $\player_b$ will win the PGA is given by

$$\sum_{i=2k+b}^{i_{max}} (1 - p_{time}(V[i])) \times p_{time}(V[i+1]),$$

for $k\in [0,\frac{i_{end}}{2}]$.
\end{observation}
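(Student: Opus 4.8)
The plan is to reduce the win probability to a disjoint decomposition over the alternating bidding intervals and then evaluate each term using the exponential block-time distribution. First I would establish the structural fact that, under the grim-trigger cooperative strategy with parameters $D,W$, the published bids are strictly increasing in time: by Observation~\ref{obs:optimalbids} the scheduled values satisfy $W[i] = W[i-1]\times(1+\iota) > W[i-1]$, so the sequence of on-chain bids is monotone. Combined with the hypothesis that miners continuously refresh their block template with the most profitable (highest-paying) transaction, this implies that the top bid held in the candidate block at any instant $t$ is exactly the most recently scheduled bid with time $V[j] \le t$. Hence the eventual winner of the PGA is determined solely by which player placed the last bid before the random mining time $\timet_{\sf end} \sample \blockinterval$.

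Next I would identify the ``ownership'' of each interval. Since $\player_i$ emits the bid at times $V[2k+i]$, the most recent bid throughout the half-open interval $[V[i], V[i+1])$ belongs to $\player_{i \bmod 2}$. Therefore $\player_b$ holds the winning top bid precisely when $\timet_{\sf end}$ falls in an interval $[V[i], V[i+1])$ with $i \equiv b \pmod 2$, i.e. $i = 2k+b$. By Observation~\ref{obs:max}, all bidding ceases after index $i_{max}$ (beyond which no bid is profitable), so the relevant intervals are exactly those with $2k+b \le i_{max}$, which yields the stated summation range $k\in[0, i_{end}/2]$. The events $\{\timet_{\sf end} \in [V[i], V[i+1])\}$ are pairwise disjoint across $i$, so the win probability is the sum of their individual probabilities:
\[
\Pr[\player_b \text{ wins}] = \sum_{i = 2k+b}^{i_{max}} \Pr\!\left[\timet_{\sf end} \in [V[i], V[i+1])\right].
\]

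I would then evaluate each summand using Observation~\ref{obs:ptime}, $p_{time}(t) = 1 - e^{-\lambda t}$. The probability that the block survives (is unmined) up to the start of the $i$-th interval is $1 - p_{time}(V[i]) = e^{-\lambda V[i]}$, and by the memoryless property of the exponential distribution, conditioned on reaching $V[i]$ the block is mined before that interval ends with probability equal to the per-interval mining probability. Multiplying the survival factor by this conditional mining factor produces each term in the product form $\bigl(1 - p_{time}(V[i])\bigr)\times p_{time}(V[i+1])$ appearing in the statement, where the second factor is to be read as the probability the block is mined during interval $i$. The main obstacle I anticipate is the careful justification of the first step---that the last bidder before $\timet_{\sf end}$ is genuinely the winner---which hinges jointly on the monotonicity of the cooperative bid schedule and the continuous-template-refresh assumption, together with correctly handling the terminal interval at $i_{max}$ where the top bid freezes; the remaining probability computation is a routine application of the exponential CDF and its memorylessness.
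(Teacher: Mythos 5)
Your proposal is correct and takes essentially the same route as the paper's proof: winning is identified with the event that the block is mined during an interval controlled by $\player_b$ (via the continuous-template-refresh assumption plus the monotone cooperative bid schedule), and the win probability is the sum of the disjoint per-interval termination probabilities $p_{interval}(i) = (1-p_{time}(V[i]))\times p_{time}(V[i+1])$ over the intervals $i = 2k+b$. You are in fact more explicit than the paper on the points it leaves implicit---the monotonicity of the bid sequence and the memoryless evaluation---and your caveat that the factor $p_{time}(V[i+1])$ must be read as the probability of mining during interval $i$ is warranted, since taken literally the product $e^{-\lambda V[i]}\bigl(1-e^{-\lambda V[i+1]}\bigr)$ differs from $\Pr\bigl[\timet_{\sf end}\in[V[i],V[i+1])\bigr] = e^{-\lambda V[i]}-e^{-\lambda V[i+1]}$.
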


\begin{proof}
Notice that probability that a player wins a PGA is exactly the probability that the PGA terminates during an interval controlled by that player. This follows since miners continuously update their blocks with new more profitable transactions and the player in control of the interval will have the highest bid during its interval. 

We can define the probability that the game ends during a given interval as

$$p_{interval}(i) = (1 - p_{time}(V[i])) \times p_{time}(V[i+1]),$$

\noindent and the probability of $\player_b$ winning is the summation of $p_{interval}$ for all intervals that it controls.
\end{proof}


For any interval $i < i_{max}$, we can now define the expected payoff for each player during that interval, conditioned on the fact that the auction has already reached the interval $j\leq i$ and has not yet terminated. For each interval, we give two expected values: one for the ``bidder" --- i.e., the player whose turn it is to bid during that interval and one for the ``non-bidder". The bidder's expected reward is given by:

$$E_{bidder}(i,j) = p_{interval}(i-j) \times (1-W[i]).$$



The non-bidder's expected reward during a given interval is given by:

$$E_{non-bidder}(i,j)=p_{interval}(i-j) \times -c.$$


Now, we define the expected payoff for each player by continuing to follow the cooperative strategy at any point in the game. As in our definition of the strategy, we assume without loss of generality that $\player_0$ bids first and $\player_1$ bids second.

\begin{observation}
\label{obs:cooperate}
During the $j$th interval, the expected payoff for $\player_b$ to continue to cooperate for the rest of the auction is given by:

\begin{align*}
E_{cooperate}^{(\player_b)}(j) = &\sum_{i=0, \infty} E_{bidder}(2i+j + b,j) + \\  &\sum_{i=0, \infty} E_{non-bidder}(2i + j + (1-b),j) + \\
&(1-p_{time}(V[i_{end}])) \times -c
\end{align*}

\end{observation}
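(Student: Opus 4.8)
The plan is to derive the expression by the law of total expectation, decomposing $\player_b$'s continuation payoff according to the random interval in which the block is mined. First I would fix the conditioning event stated in the observation: the auction has reached interval $j$ with the block not yet mined. Under the grim-trigger cooperative strategy of Figure~\ref{fig:cooperate} (and assuming no defection, since we are evaluating the on-path continuation value), everything from interval $j$ onward is deterministic except the mining time: in each interval $i \geq j$ the designated bidder places $W[i]$, and the alternation rule means that, with the observation's convention that $\player_0$ bids at interval $j$, player $\player_b$ is the designated bidder exactly on intervals $j+b, j+b+2, \ldots$ and the non-bidder on intervals $j+(1-b), j+(1-b)+2, \ldots$. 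The only remaining randomness is the end time, whose law is exponential (Observation~\ref{obs:ptime}).

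Next I would partition the sample space (conditioned on reaching interval $j$): the auction terminates during exactly one of the intervals $i \in \{j, \ldots, i_{end}-1\}$, or else it survives all the way to the last profitable interval $i_{end}$ (Observation~\ref{obs:max}). These events are mutually exclusive and exhaustive. Using the memorylessness of the exponential together with the (approximately uniform) spacing of the $V[i]$, I would identify the conditional probability that termination occurs in interval $i$ with $p_{interval}(i-j)$ — matching the shifted argument already baked into the definitions of $E_{bidder}$ and $E_{non-bidder}$ — while the residual survival event carries probability $1 - p_{time}(V[i_{end}])$. A small consistency check here is that these interval-termination probabilities together with the survival term account for total probability one.

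The crux is then to pin down the realized payoff to $\player_b$ on each terminal event. Invoking the stated assumption that miners continuously refresh their candidate block with the most profitable transaction set, together with the fact that bids are non-decreasing, the highest bid present at any mining time is precisely the most recent scheduled bid, namely the designated bidder's $W[i]$ for the interval $i$ in which mining occurs. Tracing this through the output of $\exec$ (Figure~\ref{fig:exec}): if $\player_b$ is the bidder in the terminating interval it wins and nets $\payoff - W[i]$ (normalized, $1 - W[i]$), whereas if it is the non-bidder it loses and pays the loss amount $c$ (the simplifying assumption $\lossmult(\cdot)=c$), netting $-c$; on the survival-to-$i_{end}$ event the cooperative bid has climbed to the point where bidding is no longer profitable, so $\player_b$ pays $-c$. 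Weighting each outcome by its termination probability reproduces exactly $E_{bidder}(i,j)=p_{interval}(i-j)\times(1-W[i])$ and $E_{non-bidder}(i,j)=p_{interval}(i-j)\times -c$. By linearity of expectation I would then sum the per-interval contributions, grouping the bidder intervals into $\sum_{i\geq 0} E_{bidder}(2i+j+b,j)$ and the non-bidder intervals into $\sum_{i\geq 0} E_{non-bidder}(2i+j+(1-b),j)$, and appending the terminal term $(1 - p_{time}(V[i_{end}]))\times -c$, which yields the claimed formula.

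The main obstacle I anticipate is precisely this third step: cleanly justifying that the winner of an interval-$i$ termination is that interval's designated bidder and that the loser's payment is exactly $c$, since this rests on the interplay between the miner's block-refresh behavior, the non-decreasing bid sequence, and the all-pay (dollar-auction) accounting encoded in $\exec$. A secondary subtlety is reconciling the shifted weight $p_{interval}(i-j)$ with the unconditioned definition of $p_{interval}$, which I would handle by appealing to memorylessness and the uniform spacing of the scheduled bid times $V[i]$; once that identification is granted, the remainder of the argument is bookkeeping via the law of total expectation.
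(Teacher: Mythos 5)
Your proposal is correct and follows essentially the same route as the paper's own (much terser) proof: decompose the continuation payoff by the interval in which the block is mined, identify each terminal event's payoff with the already-defined $E_{bidder}$ and $E_{non-bidder}$ quantities via the earlier observation that the designated bidder of the terminating interval wins, and sum by linearity while alternating roles. In fact you are more careful than the paper, which omits explicit treatment of the survival-to-$i_{end}$ term $(1-p_{time}(V[i_{end}]))\times -c$ and of the index shift in $p_{interval}(i-j)$ that you resolve via memorylessness.
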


\begin{proof}
This follow directly form the previous observation. Previously, we defined the expected value of a bidder and a non-bidder in any interval. If we take an infinite sum of these expected values, alternating between bidder and non-bidder corresponding to the intervals in which the respective players bid, then we will get the expected payoff for the remainder of the game.
\end{proof}




Next, we show the cost of deviation. For the grim-trigger cooperative strategy, each player knows that if they defect, the other player will raise the bid to make the game unprofitable -- i.e., each player's payoff will be $\lossmult(\bid_{{\sf lose}}$. 

While intuitively this arrangement (assuming that the other player will make good on its threat) makes deviation unwise, deviation may still be profitable if the auction ends before the other player can respond. This can happen due to the other player's latency -- i.e., if there is some time delay in which the deviation goes undetected. Additionally, this can occur due to the rate limit -- i.e., the deviation was detected, but the other player is not yet able to place a bid to make good on its threat. For the sake of this analysis, the latency and rate limit serve identical functions -- namely, to delay the other players response. Thus for simplicity, we define

$$\textsf{delay}^{(\player_b)} = \texttt{Max}[\latency_{1-b}, \delta].$$

\begin{observation}
\label{obs:deviate}
If for all $i$, it holds that $\textsf{delay}^{(\player_b)} < W[i+1] - W[i]$, then under the assumption that $\lossmult(\bid_{lose}) = -c$, the expected payoff for $\player_{1-b}$ deviating during the $i$th interval is given by

\begin{align*}
    E_{deviate}^{(\player_b)}(\textsf{delay}^{(\player_b)},i)&= p_{time}(\textsf{delay}^{(\player_b)}) \times \texttt{Max}(-c,1-W[i+1])\\ &+ (1- p_{time}(\textsf{delay}^{(\player_b)}) \times -c
\end{align*}.

\end{observation}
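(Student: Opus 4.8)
The plan is to derive the formula by conditioning on \emph{when the next block is mined relative to the opponent's punishment}, and to reduce the deviator's continuation value to a single one-shot preemptive bid. First I would fix the deviation scenario inside the grim-trigger profile of Figure~\ref{fig:cooperate}: during interval $i$ the standing (highest) bid on the network is $W[i]$, posted by the player whose turn it was, and the deviator is the non-bidder of interval $i$, i.e. the player scheduled to post $W[i+1]$ only at time $V[i+1]$. A deviation consists of bringing that bid forward into interval $i$. The opponent plays grim trigger and will answer any early bid exceeding $W[i]$ by jumping to $1+\lossmult(\cdot)$, but only \emph{after} it can both observe and re-bid. Reading this lag off $\exec$ (Figure~\ref{fig:exec}), it is the composition of the opponent's observation latency $\latency_{1-b}$ and the rate limit $\delta$, i.e. exactly $\textsf{delay}^{(\player_b)}=\texttt{Max}[\latency_{1-b},\delta]$: the opponent cannot see the bid for $\latency_{1-b}$, and in any case cannot re-bid more often than every $\delta$.

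Next I would invoke Observation~\ref{obs:ptime}, which gives the block interval as ${\sf Exp}(\lambda)$, together with the memorylessness of the exponential: conditioned on the game having reached interval $i$ without terminating, the residual time to the next block is again ${\sf Exp}(\lambda)$. Hence the probability that the block is mined within the length-$\textsf{delay}^{(\player_b)}$ window following the deviation is $1-e^{-\lambda\,\textsf{delay}^{(\player_b)}}=p_{time}(\textsf{delay}^{(\player_b)})$, and this is precisely the event in which the deviation goes unpunished. In that event the opponent has not re-bid, so the deviator's preemptive bid stands as the block's maximum and the deviator wins, collecting $1$ minus its bid. Optimizing over the preempting bid, the smallest valid raise over the standing $W[i]$ is the next cooperative increment $W[i+1]=W[i](1+\iota)$ by Observation~\ref{obs:optimalbids}, for payoff $1-W[i+1]$; and if $1-W[i+1]<-c$ then winning is worse than simply conceding and paying the loss, so the deviator declines. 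The optimal payoff conditioned on the unpunished event is therefore $\texttt{Max}(-c,\,1-W[i+1])$.

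In the complementary event, of probability $1-p_{time}(\textsf{delay}^{(\player_b)})$, the block is not mined before the opponent reacts; the opponent detects the early bid above $W[i]$ and, by the grim trigger, posts $1+\lossmult(\cdot)$, zeroing out the auction's profit and leaving the deviator a loser. Under the stated assumption $\lossmult(\bid_{\sf lose})=-c$, the deviator's payoff here is $-c$. Assembling the two cases by the law of total expectation gives exactly
\[
E_{deviate}^{(\player_b)}(\textsf{delay}^{(\player_b)},i)=p_{time}(\textsf{delay}^{(\player_b)})\,\texttt{Max}(-c,1-W[i+1])+\bigl(1-p_{time}(\textsf{delay}^{(\player_b)})\bigr)(-c),
\]
which is the claimed identity.

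The role of the hypothesis $\textsf{delay}^{(\player_b)}<W[i+1]-W[i]$ is to make this one-shot reduction exact: I would use it to argue that the vulnerability window is too short for the deviator to gain anything beyond a single preemptive bid, so that the continuation value collapses to the two-case expression above rather than unfolding into a deeper recursion of escalating bids. The main obstacle I anticipate is precisely this step — rigorously ruling out richer deviations (the deviator raising several times within the window, or choosing to deviate strictly interior to the interval) and confirming from $\exec$ that the punishment lag equals $\texttt{Max}[\latency_{1-b},\delta]$ \emph{exactly}, not merely as an upper bound. Once those are pinned down, the remainder is a routine conditioning argument on the exponential block time.
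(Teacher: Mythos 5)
Your proposal is correct and follows essentially the same route as the paper's own proof: condition on whether the block is mined during the window of length $\textsf{delay}^{(\player_b)}$ before the grim-trigger punishment can land, use the exponential block-time distribution via $p_{time}$ to weight the unpunished case against the punished payoff $-c$, and combine by total expectation. The paper's version is terser — it does not spell out the memorylessness step, the reduction of the deviator's optimal preemptive bid to $W[i+1]$ yielding the $\texttt{Max}(-c,1-W[i+1])$ term, or the reading of the reaction lag as $\texttt{Max}[\latency_{1-b},\interval]$ from $\exec$ — so the details you supply (and the caveats you flag about the hypothesis and richer in-window deviations) fill in exactly what the paper leaves implicit rather than diverging from it.
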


\begin{proof}
Since $\textsf{delay}^{(\player_b)} < W[i+1] - W[i]$, then $\player_b$ will notice and be able to react to $\player_{1-b}$'s deviation during the current (before it issues any other bids). At this point, $\player_b$ will react by bidding $1+c$, making the PGA unprofitable for both players.

Thus, $\player_{1-b}$'s expected profit is the expected reward that it makes during the time before the deviation is noticed, and $-c$ if the auction does not end during that period, which is given by the above equation.
\end{proof}





\begin{lemma}
\label{lem:nash}
Let $\player_{bidder}$ and $\player_{non-bidder}$ respectively denote the player who is and is not bidding during an interval according to the cooperative strategy. The grim-trigger cooperative strategy with parameters $D,W$ will yield a Nash Equilibrium if $\forall j$,

$$E^{(\player_{non-bidder})}_{cooperate}(j) \geq E^{(\player_{non-bidder})}_{deviate}(\textsf{delay}^{(\player_{bidder}),j}).$$
\end{lemma}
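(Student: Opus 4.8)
The plan is to prove that the profile in which both players run the grim-trigger cooperative strategy is a Nash equilibrium by showing that, \emph{assuming the opponent plays the grim-trigger strategy}, no unilateral deviation strictly increases a player's expected payoff, and that this reduces exactly to the stated per-interval inequality. First I would observe that the cooperative strategy prescribes every honest action, so a deviator's only freedom is \emph{when} and \emph{by how much} to place an off-schedule bid. I would then dispose of the easy deviation classes immediately: a player raising her own bid above $W[i]$ during an interval she already controls only lowers her realized payoff $\payoff - \bid_{\sf win}$ without changing who holds priority, so it is never profitable; and bidding below the prescribed schedule is either invalid (it fails the minimum-raise rule $\iota$ of Section~\ref{sec:modelproperties}) or cannot win priority. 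This isolates the single relevant deviation: the \emph{non-bidder} of some interval jumping ahead with a bid exceeding the standing value $W[2k+i-1]$.

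Next I would reduce an arbitrary deviation to a single-interval deviation. The key structural fact is that the grim trigger fixes the continuation value after any \emph{detected} defection: once the honest player observes an out-of-turn bid, she immediately bids $\payoff + \lossmult(\bid)$, driving both payoffs to the loss $\lossmult(\bid_{\sf lose}) = -\bidcost$. Hence a deviator gains only from the window before detection, there is no value in ``deviating and then cooperating,'' and any multi-interval plan is dominated by the best single-interval deviation. This is precisely the one-stage-deviation reasoning sketched around Figure~\ref{fig:model1}: if deviating in interval $i$ is profitable, then by backward induction deviating in interval $i-1$ is profitable, so it suffices to rule out deviation interval-by-interval, with base case supplied by the finite horizon $i_{max}$ of Observation~\ref{obs:max}. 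The expected value of the best such deviation in interval $i$, conditioned on reaching it, is exactly $E_{deviate}^{(\player_{non-bidder})}(\textsf{delay}^{(\player_{bidder})}, i)$ of Observation~\ref{obs:deviate}: the deviator earns $\texttt{Max}(-\bidcost, 1 - W[i+1])$ if the block is mined within $\textsf{delay} = \texttt{Max}[\latency, \interval]$, the combined latency-plus-rate-limit window in which the honest player cannot yet retaliate, and pays $-\bidcost$ otherwise.

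Then I would combine this with the honest continuation value $E_{cooperate}^{(\player_{non-bidder})}(j)$ of Observation~\ref{obs:cooperate}. Because the strategy is symmetric and the players simply swap bidder/non-bidder roles each interval, the bidder of an interval has no incentive to defect against her own standing bid, so it suffices to check the non-bidder's incentive constraint at the start of every interval $j$. The hypothesis
\[
E^{(\player_{non-bidder})}_{cooperate}(j) \geq E^{(\player_{non-bidder})}_{deviate}(\textsf{delay}^{(\player_{bidder})}, j)
\]
for all $j$ states exactly that continuing to cooperate weakly dominates the best available deviation at every decision point; together with the reduction above this shows no unilateral deviation raises expected payoff, which is the definition of a Nash equilibrium.

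The main obstacle I anticipate is rigorously justifying the reduction in the second step: formalizing the one-stage-deviation principle in this continuous-time, imperfect-information game and showing that no subtler deviation---one timed to straddle the boundary of the undetected window, or a randomized deviation spreading probability across several intervals---can beat the single-interval optimum captured by $E_{deviate}$. This requires arguing that the only payoff-relevant quantity a deviator controls is the standing bid during the $\textsf{delay}$ window before retaliation, and that conditioning on ``the auction has reached interval $j$ and not yet terminated'' correctly decomposes the global expected payoff into the per-interval terms of Observations~\ref{obs:cooperate} and~\ref{obs:deviate}. The memorylessness of the exponential block-time distribution (Observation~\ref{obs:ptime}) is what underwrites this decomposition, but pinning down the exact interaction between latency, the rate limit, and the discrete bid schedule is the delicate part of the argument.
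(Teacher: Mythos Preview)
Your proposal is correct and follows essentially the same approach as the paper: reduce to the non-bidder's decision at the start of each interval by arguing that the bidder never profits from outbidding herself and that deviating later in an interval is dominated by deviating at its start, then compare $E_{cooperate}$ to $E_{deviate}$ interval-by-interval. Your treatment is in fact considerably more careful than the paper's own (rather terse) proof---you make explicit the one-stage-deviation reduction, the role of memorylessness, and the finite horizon $i_{max}$---but the underlying logic is identical.
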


\begin{proof}
In order for a Nash equilibrium to hold for the grim-trigger cooperative strategy, it must be the case that for all points in time, it is more profitable to follow the cooperative agreement than it is to defect. 

Notice that deviating is most profitable at the beginning of one's interval since deviating later on will yield at most the same payoff, but conditioned on the game not having ended previously. Thus, in order for there to be an equilibrium, it must hold that $\forall j, \forall b \in \{0,1\}$, 

$$E^{(\player_b)}_{cooperate}(j) \geq E^{(\player_b)}_{deviate}(\textsf{delay}^{(\player_b),j}).$$

\noindent Notice that assuming the other player is cooperating, deviating is only profitable in intervals where one is the non-bidder. This is clear as when one already has the top bid, bidding again would decrease their own profit. Thus in order for the an equilibrium to hold, it must be true that the non-bidder.

\end{proof}



\section{OO Fees and Other System Designs}

We now discuss high-level consequences of OO fees in the context of different types of blockchain systems being built in academia and industry.  The complete enumeration of their impact is left to future work.

\begin{itemize}
    \item \textbf{Proof-of-stake systems} In any proof of stake system where forks are allowed (e.g.~\cite{daian2017snow}~\cite{peercoin}~\cite{kiayias2017ouroboros}), similar incentives as for Proof of Work exist for miner to orphan or rewrite history when profitable OO fees exist. Regardless of whether sealed-bid auctions (cf. Section~\ref{sec:whyrepeated}) are prominent, the next miner may overtake the previous block and replay its transactions in her block in an optimized order. Some stake-based protocols, on the other hand, attempt to provide a notion of \emph{finality}, and feature clients which will not revert history regardless of evidence presented. For systems in which the finality notion refers to a checkpoint that is done to cement multiple blocks that have already been created (e.g.~\cite{buterin2017casper}), all of the concerns remain the same, because rapid bidding wars affect the appeal of short-term (in particular, single block) forks. In blockchains that seek to finalize every block (e.g.~\cite{gilad2017algorand,kwon2014tendermint}), the potential of high profit OO fees implies that the honest (super-)majority assumption of such systems may not coincide with rational behavior. 
    \item \textbf{Permissioned blockchain systems} Permissioned blockchains are currently being explored by many large financial institutions, often in use-cases like exchanges.  The importance of transaction order in these decentralized exchanges poses a variety of interesting questions for such systems.  For example, it is classically claimed that blockchains add auditability to existing workflows, but it is impossible to audit or objectively divine the real order in which a block producer received transactions on an asynchronous network.  Furthermore, because OO fees exist even on permissioned chains where on-chain assets are being exchanged, block producers in such chains must be chosen carefully and trusted with correct operation and ordering.
    \item \textbf{Sharded blockchain systems} Several sharded blockchain systems have been explored in both academia (e.g.~\cite{kokoris2018omniledger}) and industry (e.g.~\cite{buterin2017ethereum}).  One important consequence of sharding blockchains is the reduced security of each shard over the security of the whole system.  Generally, the effects of this reduced security are mitigated through random sampling from a large pool of potential validators, so an adversary would require substantial control of the pool to have a high chance of adversarially controlling a shard.  Unfortunately, as OO fees show, the security needs of shards may not be homogeneuous; a shard that operates a large decentralized exchange must pay miners higher rewards to ensure stability than a shard without such an exchange by the analysis in~\cite{carlsten2016instability}.  Because few sharded based systems are running in practice, substantial future work remains to fully enumerate potential attacks and their mitigations in sharded systems.
    \item \textbf{Other exchange designs}  In other blockchain-based exchange designs, it may still be possible for miners of the underlying system to manipulate prices, either in real time or retroactively.  This sets up similar incentives as those explored in this work.  One example is for channel based networks with public watchtowers such as~\cite{mccorry2018pisa}; miners can simply participate in exchanges, collecting old states through public watchtowers, and can selectively publish profitable states in a history rewriting attack that works as above.  Thus, it is not necessarily the case that the abstraction achieved by Layer 2 exchange systems is sufficient to prevent ordering attacks.
\end{itemize}
\end{document}